\newcolumntype{s}{>{\columncolor[HTML]{FE6F5E}} c}
\newcolumntype{t}{>{\columncolor[HTML]{5D8AA8}} c}
\newtheorem{theorem}{Theorem}
\newtheorem{corollary}{Corollary}
\newtheorem{lemma}{Lemma}
\begin{document}

\title{Sparse Doppler Sensing Based on Nested Arrays}

\author{Regev~Cohen,~\IEEEmembership{Student,~IEEE,}
        Yonina~C.~Eldar,~\IEEEmembership{Fellow,~IEEE}
\thanks{This work was funded by the European Union’s Horizon 2020 research and innovation
program under grant agreement No. 646804-ERC-COG-BNYQ.}}


\maketitle

\begin{abstract}
Spectral Doppler ultrasound imaging allows visualizing blood
flow by estimating its velocity distribution over time. Duplex ultrasound is a modality in which an ultrasound system is used for displaying simultaneously both B-mode images and spectral Doppler data. In B-mode imaging short wide-band pulses are used to achieve sufficient spatial resolution in the images. In contrast, for Doppler imaging, narrow-band pulses are preferred in order to attain increased spectral resolution. Thus, the acquisition time must be shared between the two sequences. In this work, we propose a non-uniform slow-time transmission scheme for spectral Doppler, based on nested arrays, which reduces the number of pulses needed for accurate spectrum recovery. We derive the minimal number of Doppler emissions needed, using this approach, for perfect reconstruction of the blood spectrum in a noise-free environment. Next, we provide two spectrum recovery techniques which achieve this minimal number. The first method performs efficient recovery based on the fast Fourier transform. The second allows for continuous recovery of the Doppler frequencies, thus avoiding off-grid error leakage, at the expense of increased complexity. The performance of the techniques is evaluated using realistic Field II simulations as well as in vivo measurements, producing accurate spectrograms of the blood velocities using a significant reduced number of transmissions. The time gained, where no Doppler pulses are sent, can be used to enable the display of both blood velocities and high quality B-mode images at a high frame rate.  
\end{abstract}

\begin{IEEEkeywords}
Medical Ultrasound, Spectral Estimation, Nested Arrays, Blood Velocity Estimation, Blood Doppler
\end{IEEEkeywords}

\IEEEpeerreviewmaketitle

\section{Introduction}

\IEEEPARstart{S}{pectral} Doppler in medical ultrasound is a non-invasive imaging modality commonly used for quantitative estimation of blood velocity. The data for velocity estimation is acquired by insonifying the medium with a train of narrow-band ultrasound pulses along a desired direction at a constant pulse repetition frequency (PRF). The backscattered signals are then sampled and focused along the chosen direction using dynamic focusing. Assembling the samples associated with a specific depth of interest from all received signals forms the so-called slow-time signal with a center frequency proportional to the axial blood velocity. 

For a single blood cell with axial velocity $v_{z}$, the slow-time signal has a center frequency equal to \cite{jensen1996estimation}
\begin{equation}
f_{\text{D}}=-\frac{2v_{z}}{c}f_{0}
\end{equation}
where $f_{0}$ is the center frequency of the transmitted signal and $c$ is the speed of sound.
In reality, there is a distribution of blood scatterers within each resolution cell of the ultrasound system. The blood velocity distribution is estimated by reconstructing the power spectral density (PSD) of the slow-time signal. Displaying spectral analysis results over time on a pulsed Doppler spectrogram (also referred to as pulsed wave spectrogram), visualizes the evolution of the blood velocity distribution as a function of time. The time needed for each velocity estimation is the coherent processing interval (CPI), which is equal to the number of transmitted pulses $P$ divided by the PRF. As the number of transmitted pulses per unit time is limited by the speed of sound and the desired depth being examined, there is an inherent trade off between spectral and temporal resolution.

In modern commercial ultrasound systems, the spectrogram is typically estimated using Welch’s method \cite{welch1967use,stoica2005spectral}, a modified averaged periodogram based on the fast Fourier transform (FFT). However, this approach suffers from high leakage due to high sidelobes and/or low resolution. Since the resolution in Doppler frequency is governed by $P$, it requires a large number of consecutive transmissions to be used for each velocity estimate. 

In addition to Doppler measurements, simultaneous high frame rate B-mode images are required to allow the physician to navigate, select the region in which the blood velocity is estimated and to examine anatomical structures surrounding the vessel. However,  two distinct pulses are used for the two modes, B-mode and Doppler. In particular, for B-mode imaging short wide-band pulses with high carrier frequency are transmitted to increase resolution. Whereas, for Doppler imaging, narrow-band pulses with low center frequency are preferred in order to improve penetration depth and increase the precision of the velocity estimate. Moreover, the B-mode and Doppler pulses may be transmitted in different directions. Consequently, the acquisition time must be shared between the two imaging modalities.

In conventional imaging, an interleaved B-mode/Doppler sequence is used where every B-mode transmission is followed by a Doppler transmission. This halves the PRF, resulting in reduction of the maximal velocity that can be detected by a factor of two, according to the Nyquist theorem.  An alternative common approach is to regularly interrupt the Doppler sequence for a block of B-mode transmissions. However, this results in holes in the blood velocity spectrogram. These limitations raise the need for developing improved techniques for blood spectrum estimation using considerably fewer Doppler transmissions.

To circumvent these problems Kristoffersen and Angelsen \cite{kristoffersen1988time} proposed to fill in the Doppler gaps with a synthetic signal, generated based on the Doppler signal measured immediately prior to the B-mode interrupt.  Klebaek et al. \cite{klebaek1995neural} proposed the use of neural networks for predicting the evolution of the mean and variance of the Doppler signal in the gaps. However, both methods are based on the assumption that the blood flow is constant or predictable which is not true in case of abrupt changes, leading to inaccurate velocity estimation. A correlation-based method for spectral estimation from sparse data sets was proposed in \cite{jensen2006spectral}, allowing for random Doppler transmission schemes, but it requires long ensembles to avoid aliasing. This work was further investigated in \cite{mollenbach2008duplex}, which proposed proposing a technique for reconstructing the missing Doppler samples, due to B-mode transmissions, using filter banks. This method, however, reduces the velocity range in proportion to the number of missing Doppler samples.  

Two data-adaptive velocity estimators for periodically gapped data, called BPG-Capon and BPG-APES, were suggested in \cite{liu2009periodically,larsson2003spectral}. These methods are restricted to the case of periodically gapped sampling of Doppler emissions and have been shown to achieve a limited reduction of 34\% in the number of transmissions. For arbitrary Doppler subsampling patterns, two iterative methods termed BSLIM and BIAA were presented in \cite{gudmundson2011blood,gran2009adaptive}. However, they exhibit high computational load and require the use of regression filters for clutter removal, which may degrade the quality of the spectrum estimate by producing spurious frequency components \cite{torp1997clutter}. 

Several works apply compress sensing (CS) \cite{eldar2012compressed} techniques to spectral Doppler using random slow-time samples.  Zobly et al. apply basis pursuit (BP) in \cite{zobly2011compressed} and a multiple measurement vector (MMV) technique in \cite{zobly2013multiple} to recover the Doppler signal. However, the authors do not state the domain (dictionary) in which the signal is sparse. Furthermore, the resultant spectrograms exhibit artifacts. Assuming the Doppler signal is sparse under the Fourier transform or in the wave atom domain \cite{demanet2007wave}, Richy et al. propose \cite{richy2013blood,richy2011blood} decomposing the Doppler signal into several equal segments and applying CS recovery on each segment. However, this work does not consider the case of moderately or non sparse signals. Moreover, the reduction in the number of Doppler transmissions is limited to 60\% using this method. An extension of this study is presented in \cite{lorintiu2016compressed}, which proposed to reconstruct the Doppler signal using block sparse Bayesian learning (BSBL) \cite{zhang2013extension,zhang2012recovery}. However, the authors assume that the Doppler samples are temporally correlated and severe aliasing appears in their recovered spectra at high subsampling rates. In addition, the average computation time per segment using this technique is high, making it impractical for real-time implementation. 

In addition to the computational complexity and recovery artifacts in the methods above, none of these works present an analysis of the minimal number of Doppler emissions ensuring adequate reconstruction of the blood spectrum, using their techniques. 

The main contribution of this paper is twofold. First, adopting recent work on nested arrays \cite{pal2010nested,pal2010novel} in the fields of multiple-input multiple-output (MIMO) radar systems and direction of arrival (DOA) estimation, we present a non-uniform transmission scheme for spectral Doppler. Our theoretical approach does not assume the Doppler signal is sparse or its entries are correlated, nor that the blood flow is predictable. An analysis is performed, deriving the minimal number of Doppler emissions required using the nested approach. We show that the number of transmissions allowing for perfect reconstruction of the spectrum in a noise-free setting is proportional to the square root of the observation window length. Second, we propose two spectrum recovery techniques that achieve this minimal number of transmitted pulses. The first method assumes the Doppler frequencies lie on the Nyquist grid and recovers the spectrum using FFT. This technique exhibits enhanced resolution compared with Welch's method, and similar low complexity, making it suitable for real-time application. The second approach performs continuous recovery, thus preventing spectral leakage stemming from off-grid errors, at the expense of increased complexity. The performance of the techniques is validated using realistic Field II simulation data \cite{jensen1996field,jensen1992calculation} and in vivo data, showing that blood velocities can be accurately estimated from a reduced number of emissions.

The rest of the paper is organized as follows. In Section \ref{sec:model}, we review the Doppler signal model and formulate our problem. Section\,\ref{sec:sampling} describes the autocorrelation of the Doppler signal and introduces the proposed sparse slow-time sampling scheme. We then derive the minimal number of Doppler transmissions required using this emission pattern.  In Section \ref{sec:recon}, we present discrete and continuous recovery techniques that achieve this minimal number. Alternative sparse transmission schemes are discussed in Section \ref{sec:alternatives}. We evaluate the performance of the proposed algorithms in Section \ref{sec:sim} and compare them with existing state-of-the-art techniques. Finally, Section \ref{sec:con} concludes the paper.  

Throughput the paper we use the following notation. Scalars  are denoted by lowercase letters $(a)$, vectors by boldface lowercase letters $(\bf a)$, matrices by boldface capital letters $(\bf A)$ and sets are given by calligraphic font  (e.g., $\mathcal{A}$). The $(i,j)$th element of $\bf A$ is denoted by ${\bf A}(i,j)$, ${\bf a}_l$ is the $l$th column of ${\bf A}$ and ${\bf a}(l)$ represents the $l$th element of ${\bf a}$. The notations $(\cdot)^T,(\cdot)^\ast$ and $(\cdot)^H$  indicate the transpose, conjugate and Hermitian operations, respectively. The vectorization of  a matrix $\bf A$ into a column stack is given by $\text{vec}({\bf A})$. For a positive integer $P$, $d|P$ implies that $d$ is a divisor of $P$ with  $1<d<P$.

\section{Doppler Model and Problem Formulation  }
\label{sec:model}
\subsection{Doppler Model}
A standard ultrasound system in spectral Doppler mode transmits a pulse train
\begin{equation}
s_{tx}(t)=\sum_{p=0}^{P-1}h(t-pT),\qquad 0\leq t\leq PT,
\label{eq:tx}
\end{equation} 
consisting of $P$ equally spaced pulses{} $h(t)$. The pulse repetition interval (PRI) is $T$, and
its reciprocal $f_{\text{prf}}=1/T$ is the PRF. The entire span of the signal in (\ref{eq:tx}) is called the CPI. The pulse $h(t)$ is a sinusoid defined as
\begin{equation}
h(t)=\sin(2\pi f_{0}t),\quad 0\leq t\leq T_{\max},
\end{equation}
where $f_{0}$ is the center frequency of the signal and $T_{\max} < T$ is the pulse duration, determined by the maximal depth examined.

Consider a single blood scatterer. The pulses reflect off the scatterer and propagate back to the transducer. The noise-free received signal can be modeled as
\begin{equation}
s(t)=\sum_{p=0}^{P-1}\alpha \sin \bigg(2\pi f_{0}\Big(t-pT-\frac{2d_{p}}{c}\Big)\bigg),
\label{eq:stx}
\end{equation} 
where $p$ is the emission number, $c$ is the sound wave propagation speed, $\alpha$ is the amplitude related to blood scatterer reflectivity and $d_{p}$ is its depth at the time of the $p$th transmission. For mathematical convenience, we express $s(t)$ as a sum of single frames
\begin{equation}
s(t)=\sum_{p=0}^{P-1}s_{p}(t),
\end{equation}
where
\begin{equation}
s_{p}(t)=\alpha \sin \bigg(2\pi f_{0}\Big(t-pT-\frac{2d_{p}}{c}\Big)\bigg).
\label{eq:sp}
\end{equation}

The blood scatterer movement along the beam direction during $P$ consecutive transmissions is given by
\begin{equation}
d_{p}=d_{0}+v\cdot pT,\qquad 0\leq p\leq P-1,
\label{eq:dp}
\end{equation}
where $d_{0}$ is the initial depth of the blood scatterer and $v$ is its axial velocity. Substituting (\ref{eq:dp}) into (\ref{eq:sp}), we get
\begin{equation}
s_{p}(t)=\alpha \sin \bigg(2\pi f_{0}\Big(t-pT-\frac{2d_{0}}{c}-\frac{2v}{c}pT\Big)\bigg).
\end{equation}
Each frame is then aligned $\tilde{s}_{p}(t)=s_{p}(t+pT)$ and sampled at rate $f_s$, determined by the desired spatial axial resolution. This yields a 2D discrete signal
\begin{equation}
s[k,p]=\tilde{s}_{p}\Big(\frac{k}{f_{s}}\Big)=\alpha \sin \bigg(2\pi f_{0}\Big(\frac{k}{f_{s}}-\frac{2d_{0}}{c}-\frac{2v}{c}pT\Big)\bigg),
\label{eq:2dsamples}
\end{equation}
where $k$ is the sample index associated with depth.

The samples (\ref{eq:2dsamples}) form a 2D measurement matrix $\mathbf{S}\in \mathbb{C}^{K\times P}$  where $\mathbf{S}(k,p)=s[k,p]$. For a fixed pulse number $p$, the samples along the row dimension of $\mathbf{S}$ are referred to as fast-time samples and are related to the $p$th pulse transmission. Each fast-time sample corresponds to a different depth $k$ of the scanned medium. For a given $k$, the samples along the column dimension of $\mathbf{S}$ are referred to as slow-time samples and are associated with the same depth, one sample per pulse emission.

Following (\ref{eq:2dsamples}), the analytical signal is generated to give the in-phase and quadrature components
\begin{align}
\begin{split}
x[k,p] &=s[k,p]+j\mathcal{H}_{k}\{s[k,p]\}= \\
&=\alpha \exp \bigg(2\pi jf_{0}\Big(\frac{k}{f_{s}}-\frac{2d_{0}}{c}-\frac{2v}{c}pT\Big)\bigg),
\end{split}
\end{align}
where $\mathcal{H}_{k}\{\cdot\}$ is the discrete Hilbert transform in the fast-time direction. Since $f_{0}/f_{s}$ is known, we demodulate the signal $x[k,p]$, resulting in
\begin{equation}
y[k,p]=\alpha \exp \bigg(-2\pi jf_{0}\Big(\frac{2d_{0}}{c}+\frac{2v}{c}pT\Big)\bigg).
\label{eq:y_kp}
\end{equation}
Define the complex amplitude $\tilde{\alpha}=\alpha \exp (-j\frac{4\pi d_{0}}{c}f_{0})$ and
denote the Doppler frequency by 
\begin{equation}
f\triangleq -\frac{2v}{c}f_{0}.
\end{equation}
Then, we can represent the signal given in (\ref{eq:y_kp}) as
\begin{equation}
y[k,p]=\tilde{\alpha} \exp \big(2\pi jf pT\big).
\label{eq:y_kp2}
\end{equation}

Consider  a specific depth $k$. The measured signal in (\ref{eq:y_kp2}) can be viewed as a realization of a continuous-time wide-sense stationary (WSS), comprises a zero-mean complex amplitude amplitude and a time invariant velocity 
\begin{equation}
y_{k}(t)=\tilde{\alpha} \exp (2\pi jf t),
\end{equation}
which is sampled at time $t=pT$ $(0\leq p\leq P-1)$, namely, at a sampling rate of $f_{\text{prf}}$.  Decreasing the sampling interval $T$ increases the maximal velocity that can be recovered according to the Nyquist theorem, however, there is a trade-off since it limits the maximal depth being examined. In addition, the spectral resolution is governed by $P$, motivating the desire to increase the number of transmissions as long as they are limited to be within the time when the velocity is assumed to be constant.  

In the general case, each resolution cell of the ultrasound imaging system contains a distribution of  blood scatterers. Consequently, the measured signal consists of $M>1$ unknown frequencies $\{f_{m}\}_{m=1}^{M}$. Taking the latter into account, we extend the signal model written in (\ref{eq:y_kp2}) to
\begin{equation}
y[k,p]=\sum_{m=1}^{M} \alpha_{m}\exp(2\pi jf_{m}pT),\qquad 0\leq p\leq P-1.
\end{equation}
Therefore, the received signal is composed of $M$ components where the $m$th component is defined by two parameters: a Doppler frequency $f_{m}$, proportional to an axial velocity $v_{m}$; and a complex random amplitude $\alpha_{m}$, related to the number of blood cells moving at an axial velocity $v_{m}$ and their positions. The Doppler frequencies $\{f_{m}\}_{m=1}^{M}$ are assumed to lie in the unambiguous frequency domain, that is $|f_{m}|\leq \frac{1}{2T}=\frac{1}{2}f_{\text{prf}}$ for all $1\leq m\leq M$.

Assembling the slow-time samples $y[k,p]$ for $P$ consecutive transmissions into a vector we obtain
\begin{equation}
{\bf y}[k]=\bf{A}\boldsymbol{\alpha},
\label{eq:y=Ax}
\end{equation}
where ${\bf y}[k]=[y[k,0],y[k,1],...,y[k,P-1]\,]^T\in\mathbb{C}^{P\times 1}$ is the slow-time vector, the vector $\boldsymbol{\alpha}\in\mathbb{C}^{M\times 1}$ consists of $M$ amplitudes $\{\alpha_m\}_{m=1}^M$ and the matrix ${\bf A}\in\mathbb{C}^{P\times M}$ is a Vandermonde matrix, whose entries are given by ${\bf A}(p,m)=\exp(2\pi jf_mpT)$. 

Based on the model (\ref{eq:y=Ax}), the goal is to recover the frequency components $\{f_{m}\}_{m=1}^{M}$  which form the matrix  $ \bf A $ and to estimate the variances $\{\sigma_{m}^2\}_{m=1}^{M}$ of the random vector $\boldsymbol{\alpha}$, i.e.,  the power spectrum.   

\subsection{Standard Processing}

In standard Doppler processing \cite{jensen1996estimation,welch1967use}, the Doppler frequencies are assumed to lie on the Nyquist grid, that is $f_mT=i_m/P$, where $i_m$ is an integer in the range $0\leq i_m\leq P-1$. Using this assumption, (\ref{eq:y=Ax}) can be rewritten with ${\bf A}={\bf F}^H$ as
 \begin{equation}
 {\bf y}[k]={\bf F}^H\boldsymbol{\alpha},
 \end{equation}
  where ${\bf F}\in\mathbb{C}^{P\times P}$ is the FFT matrix. This implies that $\boldsymbol{\alpha}$ is a vector of length $P$ with $M$ non-zero values $\{\alpha_m\}_{m=1}^M$ at indices  $\{i_m\}_{m=1}^M$. Consequently, the power spectrum, to be recovered, is defined as a vector  $\bf p\in\mathbb{R}^{P\times 1}$ with a non-zero value $\sigma_m^2$ at index $i_m$.

  Assuming we have enough snapshots of the slow-time vector ${\bf y}[k]$, a conventional estimate of the power spectrum is given by
  \begin{equation}
  \hat{\bf p}_\text{standard}=\frac{1}{K}\sum_{k=1}^K |{\bf F}{\bf y}[k]|^2,
  \end{equation}
  where the squared magnitude is computed element-wise.
 In this case, the spectral resolution is equal to $2\pi/PT$, where $P$ is chosen large enough to attain sufficient resolution.

\subsection{Problem Formulation}
In this work, we wish to recover the power spectrum $\bf p$ with improved spectral resolution while significantly reducing the number of transmitted Doppler pulses.

For an observation window of size $P$, we propose a  new transmission strategy  in which only $N<P$ pulses are sent with non-uniform time steps between them over the entire CPI. We show that the power spectrum can be fully reconstructed with a resolution of $2\pi/(2P-1)T$ at the same complexity of standard processing. Note that we do not recover the slow-time signal but only its power spectrum. We prove that $N = 2 \sqrt{P}-1$ is the minimal number of transmissions enabling perfect reconstruction of the spectrum in a noise-free environment using our approach, and present recovery techniques that achieve this number.  

Using our techniques, we allow periods of time where no Doppler pulse is sent, which can be exploited  for B-mode transmission sequences.
Consequently, the same CPI may be used to achieve Doppler velocity estimates and high quality B-mode images at a high frame rate. 

\section{Nested Slow-Time Sampling}
\label{sec:sampling}

In this section, we present a non-uniform Doppler transmission scheme from which the blood spectrum may be recovered with improved resolution, in comparison to standard processing. We first extend the signal model (\ref{eq:y=Ax}) by deriving an expression for the signal autocorrelation function. 

\subsection{Correlation Domain}
\label{subsec:corrdom}
Consider the model given by (\ref{eq:y=Ax}) and define the autocorrelation matrices ${\bf R_y}=\mathbb{E}[{\bf yy}^H]\in\mathbb{C}^{P\times P}$ and ${\bf R}_{\boldsymbol{\alpha}}=\mathbb{E}[\boldsymbol{\alpha\alpha}^H]\in\mathbb{C}^{M\times M}$. Then,
\begin{equation}
{\bf R_y}= {\bf AR_{\boldsymbol{\alpha}}A}^H.
\label{eq:R=ARA}
\end{equation}
We  further assume that the amplitudes are statistically uncorrelated with unknown variances such that
\begin{equation}
\mathbb{E}[\alpha_{m}\alpha_{n}^{\ast}]=\sigma_{m}^{2}\delta[n-m],
\end{equation} 
where $\delta[\cdot]$ is the Kronecker delta. Under this assumption, the matrix ${\bf R}_{\boldsymbol{\alpha}}$ is a diagonal matrix with ${\bf R}_{\boldsymbol{\alpha}}(m,m)=\sigma_m^2$. Denoting the diagonal of ${\bf R}_{\boldsymbol{\alpha}}$ by $\bf p\in\mathbb{R}^{M\times 1}$, it follows that
\begin{equation}
{\bf r}\triangleq\text{vec}({\bf R_y})=({\bf A}^\ast \odot{\bf A}){\bf p},
\label{eq:r=Ap}
\end{equation}
where ${\bf A}^\ast \odot{\bf A}\in\mathbb{C}^{P^2\times M}$ and  $\odot$ denotes the Khatri-Rao product defined as a column-wise Kronecker product between two matrices with the same number of columns \cite{van2002optimum,ma2009doa}.

For a Vandermonde matrix $\bf A$ defined as in (\ref{eq:y=Ax}), the matrix ${\bf A}^\ast \odot{\bf A}$ has full column rank if  $M\leq2P-1$ \cite{ma2010doa}. Therefore, assuming this condition holds, (\ref{eq:r=Ap}) can be solved uniquely, i.e., we can recover the blood spectrum $\bf p$. Moreover, this condition allows to recover $\bf p$ while transmitting fewer Doppler pulses, as we show in the next subsection. 

\subsection{Nested Transmission Scheme}
\label{subsec:sampling}

We now present a Doppler transmission scheme based on the concept of nested arrays \cite{pal2010nested,pal2012nested,pal2012nested-b}, which has recently been considered in the fields of MIMO radar and DOA. A nested array is an array geometry obtained by systematically nesting two uniform linear arrays (ULA), which allows to resolve $O(N^2)$ signal sources using only $N$ physical sensors when the second-order statistics of the received data is used. We adopt this concept and modify it for Doppler emissions with a fixed CPI (i.e., limited aperture).  

Following the work in  \cite{pal2010nested}, we introduce two positive integers $N_1,N_2$ in the range $1\leq N_1,N_2\leq P$ such that
\begin{equation}
N_2(N_1+1)=P.
\label{eq:N2(N1+1)=P}
\end{equation}
 We then choose the number of pulses to be $N=N_1+N_2$. Notice that $N=N_1+N_2\leq P$  for any two positive integers satisfying (\ref{eq:N2(N1+1)=P}). In the next section we will show how to choose $N_1,N_2$ in order to minimize $N$. 

 Given $N_1$ and $N_2$, we define the following two sets
 \begin{align}
\begin{split}
\mathcal{S}_{\text{N}_1}&=\{1,2,...,N_1\}, \\
\mathcal{S}_{\text{N}_2}&=\{n(N_1+1), \quad\,  n=1,2,...,N_2\}.
\end{split}
 \end{align}
Denote by $\mathcal{S}_\text{N}$  the ordered set of the union of $\mathcal{S}_{\text{N}_1}$ and $\mathcal{S}_{\text{N}_2}$ 
\begin{equation}
\mathcal{S}_\text{N}=\{\mathcal{S}_{\text{N}_1}\cup\mathcal{S}_{\text{N}_2} \},
\end{equation}
which is referred to as a nested array.
By varying $N_1$ and $N_2$ we generate different sets $\mathcal{S_\text{N}}$. Any set in this class is a concatenation of two ULAs with increasing inter-element spacing. Note that for $N_1=P-1$ and $N_2=1$ we have $\mathcal{S}_\text{N}=\{1,2,...,P\}$, hence, the standard transmission pattern is a special case of nested arrays.

Consider a non-uniform transmission pattern for spectral Doppler imaging such that the $n$th pulse is sent at time $p_nT$, where $p_n$ is the $n$th element of $\mathcal{S}_\text{N}$, as illustrated in Fig.\,\ref{fig:patterns}. 
\begin{figure}[h]
 \centering
 \includegraphics[clip,width=1\linewidth]{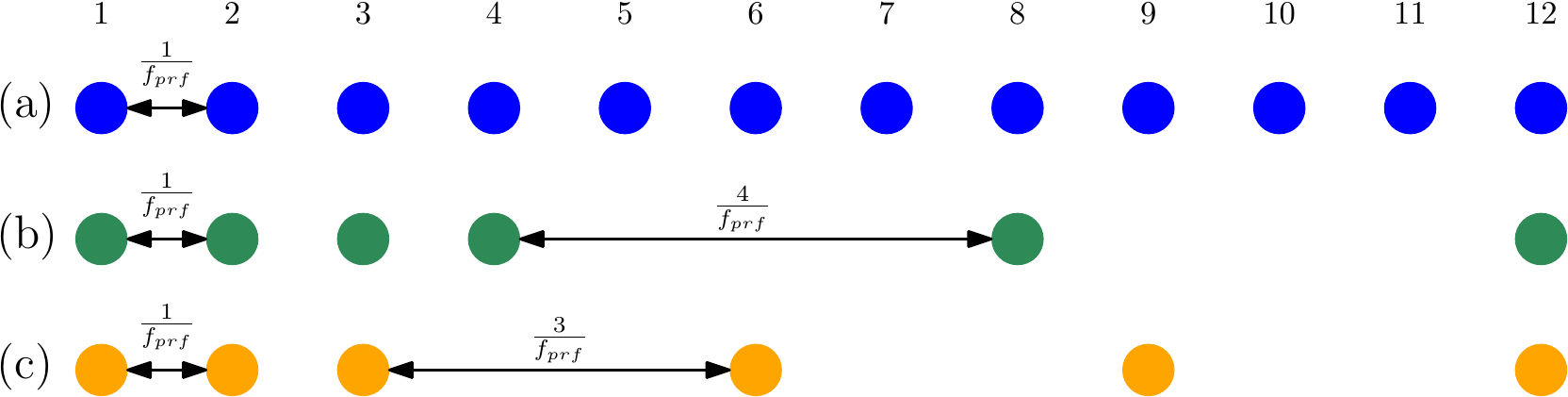}
 \caption{{\bf Transmission Patterns}. Different transmission patterns for an observation window of size $P=12$, where every circle represents a Doppler pulse emission. (a) Standard transmission pattern. (b) Nested transmission pattern for $N_1=N_2=3$. (c) Nested transmission pattern for $N_1=2$ and $N_2=4$.}
  \label{fig:patterns}
 \end{figure}
In this case, (\ref{eq:stx}) becomes
\begin{equation}
s_{tx}(t)=\sum_{n=0}^{N-1}\sin \Big(2\pi f_{0}\big(t-(p_n-1)T\big)\Big),\qquad 0\leq t\leq PT.
\end{equation}
Following the processing on the received signals described in Section \ref{sec:model}, the measured signal is written similarly to (16) as
\begin{equation}
y[k,n]=\sum_{m=1}^{M} \alpha_{m}\exp\big(2\pi jf_{m}(p_n-1)T\big),\quad 0\leq n\leq N-1.
\end{equation}
In vector form we have
\begin{equation}
{\bf y}_\text{N}[k]={\bf A}_\text{N}\boldsymbol{\alpha},
\label{eq:main}
\end{equation}
where ${\bf y}_\text{N}[k]\in\mathbb{C}^{N\times 1}$ is the nested slow-time vector composed of samples from $N$ emissions and ${\bf A}_\text{N}\in\mathbb{C}^{N\times M}$ is a matrix whose entries are given by ${\bf A}(n,m)=\exp\big(2\pi jf_m(p_n-1)T\big)$. Note that ${\bf A}_\text{N}$ is constructed by choosing rows from the Vandermonde matrix $\bf A$, defined in (\ref{eq:y=Ax}), according to $\mathcal{S}_\text{N}$. 

Denote the autocorrelation matrix ${\bf R}_{{\bf y}_\text{N}}=\mathbb{E}[{\bf y}_\text{N}{\bf y}_\text{N}^H]\in\mathbb{R}^{N\times N}$. Similarly to (\ref{eq:R=ARA}) and (\ref{eq:r=Ap}), we have
 \begin{gather}
 {\bf R}_{{\bf y}_\text{N}}= {\bf A_\text{N}R_{\boldsymbol{\alpha}}A}_\text{N}^H, \label{eq:covmtx}\\
{\bf r}_\text{N}\triangleq\text{vec}({\bf R}_{{\bf y}_\text{N}})=({\bf A}_\text{N}^\ast \odot{\bf A}_\text{N}){\bf p}\triangleq\tilde{\bf A}{\bf p}. 
\label{eq:khtrao}
 \end{gather}
In (\ref{eq:khtrao}), the $m$th column of the matrix $\tilde{\bf A}\in\mathbb{C}^{N^2\times M}$ has entries $\exp\big(2\pi jf_m(p_1-p_2)\big)$ for $p_1,p_2\in\mathcal{S}_\text{N}$, where $p_1$ and $p_2$ are pulse locations in the nested array $\mathcal{S}_\text{N}$. Defining the difference set of $\mathcal{S}_\text{N}$ as
 \begin{equation}
 \mathcal{D}=\{p_i-p_j|\quad p_i,p_j\in\mathcal{S}_\text{N}\},
 \end{equation}
 the entries of $\tilde{\bf A}$ are given by $\tilde{\bf A}(d,m)=\exp(2\pi jf_mp_dT)$ where $p_d$ is the $d$th element of $\mathcal{D}$. Note that in our definition of $\mathcal{D}$, we allow repetition of its elements. 

 The system of equations defined in (\ref{eq:khtrao}) can be solved uniquely if the matrix $\tilde{\bf A}$ has  full column rank. Theorem\,1 states necessary  conditions for unique recovery. The theorem relies on the following lemma.

 \begin{lemma}
 \label{lem:diffset}
 Let $\mathcal{D}_u$ be the set of unique elements of $\mathcal{D}$. Then, $\mathcal{D}_u$ consists of exactly $2N_2(N_1+1)-1$ distinct integers in the continuous range  from $-N_2(N_1+1)+1$ to $N_2(N_1+1)-1$. 
 \end{lemma}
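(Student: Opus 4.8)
The plan is to prove the stronger statement that $\mathcal{D}_u$ equals the \emph{entire} block of integers $\{-(N_2(N_1+1)-1),\dots,N_2(N_1+1)-1\}$; since $N_2(N_1+1)=P$, this block has exactly $2N_2(N_1+1)-1=2P-1$ elements and already lies in the range asserted in the lemma, so the statement follows once this equality is established. I would split the argument into a trivial inclusion and a constructive reverse inclusion.

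For the inclusion $\mathcal{D}_u\subseteq\{-(N_2(N_1+1)-1),\dots,N_2(N_1+1)-1\}$, I would simply observe that the smallest and largest elements of $\mathcal{S}_\text{N}$ are $1$ and $N_2(N_1+1)$ respectively (the latter being the $n=N_2$ term of $\mathcal{S}_{\text{N}_2}$), so every difference $p_i-p_j$ with $p_i,p_j\in\mathcal{S}_\text{N}$ is an integer of absolute value at most $N_2(N_1+1)-1$. This gives $|\mathcal{D}_u|\le 2N_2(N_1+1)-1$ with no effort.

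The substance is the reverse inclusion. Since $0=p-p\in\mathcal{D}$ and $\mathcal{D}$ is closed under negation, it is enough to realize every integer $\ell$ with $1\le\ell\le N_2(N_1+1)-1$ as a difference of two elements of $\mathcal{S}_\text{N}$. The key step is to write $\ell=a(N_1+1)+b$ with $0\le b\le N_1$ by Euclidean division, which forces $0\le a\le N_2-1$. When $b=0$ (hence $a\ge1$) I would take $p_i=(a+1)(N_1+1)$ and $p_j=N_1+1$, both in $\mathcal{S}_{\text{N}_2}$ since $a+1\le N_2$, so that $p_i-p_j=a(N_1+1)=\ell$. When $1\le b\le N_1$ I would instead take $p_i=(a+1)(N_1+1)\in\mathcal{S}_{\text{N}_2}$ together with $p_j=(N_1+1)-b$, which lies in $\mathcal{S}_{\text{N}_1}=\{1,\dots,N_1\}$ precisely because $1\le b\le N_1$, giving $p_i-p_j=a(N_1+1)+b=\ell$. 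Hence every such $\ell$, and by symmetry its negative, belongs to $\mathcal{D}$, and together with the first inclusion we conclude $\mathcal{D}_u=\{-(N_2(N_1+1)-1),\dots,N_2(N_1+1)-1\}$, i.e.\ exactly $2N_2(N_1+1)-1$ distinct integers forming the asserted continuous range.

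I do not expect a genuine obstacle; the only thing requiring care is verifying that the indices produced by the construction remain admissible — namely $a+1\le N_2$ so that $(a+1)(N_1+1)$ is a legitimate member of $\mathcal{S}_{\text{N}_2}$, and $(N_1+1)-b\in\{1,\dots,N_1\}$ — which is exactly where the constraint $N_2(N_1+1)=P$ and the remainder bounds from Euclidean division are used. Conceptually, the nested geometry is designed so that the length-$N_1$ inner ULA of unit spacing exactly bridges the gaps of size $N_1+1$ between consecutive elements of the outer ULA, yielding a gap-free (hole-free) difference set.
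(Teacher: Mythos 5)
Your proposal is correct and follows essentially the same route as the paper's own proof: bound the differences by $N_2(N_1+1)-1$, reduce to nonnegative lags by symmetry, and realize each lag $\ell = a(N_1+1)+b$ via $p_i=(a+1)(N_1+1)\in\mathcal{S}_{\text{N}_2}$ and $p_j=(N_1+1)-b$, with the same case split on whether the remainder vanishes. Your explicit verification that $a+1\le N_2$ is a welcome (if minor) addition of care over the paper's implicit use of the same fact.
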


 \begin{proof}
See Appendix \ref{app:lemma1}. 
 \end{proof}
 The number of degrees of freedom (DOF) of  the nested set $\mathcal{S}_\text{N}$ is defined as the cardinality of the set $\mathcal{D}_u$.  In our case, according to Lemma \ref{lem:diffset}, the cardinality is equal to $|\mathcal{D}_u|=2N_2(N_1+1)-1$. This number dictates the DOF of the system defined in (\ref{eq:khtrao}) as stated in the next theorem, which follows directly from Lemma 1.
 \begin{theorem}
 \label{theo:fullrank}
 Let ${\bf A}_\text{N}\in\mathbb{C}^{N\times M}$  be the matrix defined in (\ref{eq:covmtx}) with $|f_m|\leq\frac{1}{2}f_\text{prf}$, $1\leq m\leq M$. Then, the matrix $\tilde{\bf A}\triangleq({\bf A}_\text{N}^\ast \odot{\bf A}_\text{N})\in\mathbb{C}^{N^2\times M}$ has exactly $2P-1$ distinct rows. It has full column rank if $2P>M$.
 \end{theorem}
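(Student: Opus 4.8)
The plan is to reduce the full-rank claim to the classical fact that a Vandermonde matrix with distinct nodes has full column rank, using Lemma~\ref{lem:diffset} to control the row structure of $\tilde{\bf A}$. First I would count the distinct rows. As noted just before the theorem, the $d$th row of $\tilde{\bf A}$ is $\big[\exp(2\pi j f_1 p_d T),\ldots,\exp(2\pi j f_M p_d T)\big]$, and therefore depends on the index $d$ only through the difference value $p_d=p_i-p_j$ with $p_i,p_j\in\mathcal{S}_\text{N}$. Consequently, any two rows coming from the same element of the difference multiset $\mathcal{D}$ are identical, and collapsing these repetitions leaves one row per element of $\mathcal{D}_u$. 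By Lemma~\ref{lem:diffset}, $\mathcal{D}_u$ is exactly the block of consecutive integers from $-N_2(N_1+1)+1$ to $N_2(N_1+1)-1$, which by (\ref{eq:N2(N1+1)=P}) has cardinality $2N_2(N_1+1)-1=2P-1$; this is precisely the asserted number of distinct rows (equivalently, the degrees of freedom of the nested set).

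Next I would pass to a Vandermonde matrix. Deleting repeated rows does not change the column space, so $\text{rank}(\tilde{\bf A})$ equals the rank of the $(2P-1)\times M$ matrix ${\bf B}$ whose rows are indexed by $q\in\mathcal{D}_u=\{-(P-1),\ldots,P-1\}$, with ${\bf B}(q,m)=\exp\big(2\pi j f_m qT\big)=z_m^{\,q}$ where $z_m\triangleq\exp(2\pi j f_m T)$. Scaling the $m$th column of ${\bf B}$ by the nonzero factor $z_m^{\,P-1}$ leaves the rank unchanged and yields a genuine Vandermonde matrix ${\bf V}$ with ${\bf V}(\ell,m)=z_m^{\,\ell}$ for $\ell=0,1,\ldots,2P-2$. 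Since the Doppler frequencies are distinct and lie in the unambiguous band $|f_m|\le\tfrac12 f_\text{prf}$, the nodes $z_1,\ldots,z_M$ are distinct points on the unit circle. If $2P>M$, i.e. $M\le 2P-1$, then ${\bf V}$ has at least $M$ rows, and the top $M\times M$ block of ${\bf V}$ is a square Vandermonde matrix with determinant $\prod_{1\le m<n\le M}(z_n-z_m)\neq 0$; hence ${\bf V}$, and therefore $\tilde{\bf A}$, has rank $M$, which is the full-column-rank statement.

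The substantive content of the theorem is Lemma~\ref{lem:diffset} — the fact that the difference coarray of the nested set is a hole-free integer segment of length $2P-1$ — after which everything reduces to standard linear-algebra bookkeeping. Accordingly, the steps that require the most care are the rank-preservation reductions: checking that identifying rows arising from equal difference values, and rescaling columns by $z_m^{\,P-1}$, genuinely preserve rank, and confirming that the distinct-frequency hypothesis together with $|f_m|\le\tfrac12 f_\text{prf}$ rules out coincident nodes $z_m=z_n$. None of these is difficult, but they are the points where one must be precise; the Vandermonde determinant computation itself is routine.
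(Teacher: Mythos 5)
Your proposal is correct and follows essentially the same route as the paper's proof: invoke Lemma~\ref{lem:diffset} to conclude that the rows of $\tilde{\bf A}$, indexed by the difference set, collapse to $2N_2(N_1+1)-1=2P-1$ distinct rows forming a hole-free Vandermonde structure, from which full column rank follows when $2P>M$. The only difference is that you spell out the bookkeeping (row collapsing, column rescaling by $z_m^{P-1}$, the nonvanishing Vandermonde determinant for distinct nodes) that the paper leaves implicit.
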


 \begin{proof}
Recall that the entries of $\tilde{\bf A}$ are given by  $\tilde{\bf A}(d,m)=\exp(2\pi jf_mp_d)$. This implies that the $d$th row of $\tilde{\bf A}$ corresponds to the $d$th element of the difference set $\mathcal{D}$. Consequently, the number of distinct rows of $\tilde{\bf A}$ is  equivalent  to the number of unique elements of $\mathcal{D}$, which from Lemma \ref{lem:diffset} is $2N_2(N_1+1)-1$. Since $N_2(N_1+1)=P$, the matrix $\tilde{\bf A}$ has $2P-1$ distinct rows which correspond to a Vandermonde matrix. Hence,  $\tilde{\bf A}$ is full column rank if $2P>M$.  
 \end{proof}

 We can relate each element of the set $\mathcal{D}_u$ to a different time lag of the autocorrelation function of the slow-time signal. Thus, Lemma \ref{lem:diffset}, followed by Theorem \ref{theo:fullrank}, ensures the recovery of all time lags of the autocorrelation function. This means that for $2P>M$, we can retrieve the power spectrum of the slow-time signal by exploiting its stationarity property and the lack of correlation between the amplitudes. As we probe below in Theorem \ref{theo:mintrans}, this may occur ever for $N<P$.  

\subsection{Minimal Sampling Rate}
We next derive the minimal number of Doppler transmissions which allow perfect spectrum recovery while using the nested emission scheme introduced in Subsection $\ref{subsec:sampling}$.

Given an observation window of size $P$, we seek integers $N_1$ and $N_2$ which minimize the total number of Doppler transmissions $N$ while maintaining the overall CPI. This can be cast as the following optimization problem:
\begin{align}
\begin{split}
&\underset{N_1,N_2\in\mathbb{N}^{+}}{\min}\qquad N_1+N_2 \\ 
&\text{subject to}\quad N_2(N_1+1)=P.
\end{split}
\label{eq:mintrans}
\end{align} 
Note that whenever $P$ is a prime number there is only one feasible solution, and hence it is optimal, which is $N_1=P-1$ and $N_2=1$, leading to the standard transmission scheme. Therefore,  we treat the case in which $P$ is not prime and (\ref{eq:mintrans}) becomes a combinatorial optimization problem. A closed form solution to this problem is given by the following theorem.
\begin{theorem}
Given an observation window of size $P$, let $\mathcal{D}_1$ and $\mathcal{D}_2$ be the sets defined as follows
\begin{equation*}
\mathcal{D}_1=\big\{d|P : d\leq\sqrt{P}\big\},\quad\mathcal{D}_2=\big\{d|P : d\geq\sqrt{P}\big\}.
\end{equation*}
Then, the optimum values for $N_1$ and $N_2$ are given by
\begin{align}
\begin{split}
&N_1 =\max (\mathcal{D}_1)-1,\, N_2 =\min(\mathcal{D}_2), \\
&N_1 =\min (\mathcal{D}_2)-1,\, N_2 =\max(\mathcal{D}_1).
\end{split}
\end{align}
\label{theo:mintrans}
\end{theorem}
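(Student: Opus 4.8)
The plan is to solve the combinatorial optimization problem in (\ref{eq:mintrans}) by substituting the constraint and reducing it to a statement about divisors of $P$. Write $d = N_2$, so that the constraint $N_2(N_1+1)=P$ forces $d$ to be a divisor of $P$ and $N_1+1 = P/d$. The objective becomes $N = N_1 + N_2 = d + P/d - 1$, a function of the single divisor $d$. Thus the problem is equivalent to minimizing $g(d) \triangleq d + P/d$ over the divisors $d$ of $P$ with $1 \le d \le P$ (the $-1$ is a constant and plays no role).

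Next I would analyze $g(d) = d + P/d$ over the reals: its derivative vanishes at $d = \sqrt{P}$, it is decreasing on $(0,\sqrt{P}]$ and increasing on $[\sqrt{P},\infty)$, and it is symmetric under $d \leftrightarrow P/d$. Hence, among all \emph{real} $d>0$ the minimum is at $\sqrt{P}$, and among the \emph{divisors} of $P$ the minimizer is the divisor closest to $\sqrt{P}$. Because divisors come in complementary pairs $(d, P/d)$, the divisor just below $\sqrt{P}$ — namely $\max(\mathcal{D}_1)$ — is paired with the divisor just above $\sqrt{P}$ — namely $\min(\mathcal{D}_2)$ — and these two choices of $d$ give the \emph{same} value of $g$. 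This yields the two symmetric optima: either $N_2 = \min(\mathcal{D}_2)$ with $N_1 + 1 = P/N_2 = \max(\mathcal{D}_1)$, or $N_2 = \max(\mathcal{D}_1)$ with $N_1 + 1 = P/N_2 = \min(\mathcal{D}_2)$, exactly matching the two lines in the theorem statement. (When $P$ is a perfect square these two coincide, with $N_1 = \sqrt{P}-1$, $N_2 = \sqrt{P}$.)

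The key steps, in order: (i) substitute the constraint to get the one-variable objective $g(d) = d + P/d - 1$ over divisors $d \mid P$; (ii) establish monotonicity of $d \mapsto d + P/d$ on either side of $\sqrt{P}$ and its pairing symmetry; (iii) conclude that the optimal divisor is the one nearest $\sqrt{P}$, and that the two divisors flanking $\sqrt{P}$ (one in $\mathcal{D}_1$, one in $\mathcal{D}_2$) achieve equal, hence optimal, cost; (iv) translate back to $(N_1, N_2)$ and verify feasibility ($N_1, N_2 \in \mathbb{N}^+$, which holds since both are divisor-derived positive integers, and $N = N_1 + N_2 \le P$ by Theorem~1's setup). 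A small point to be careful about: one must check that $\mathcal{D}_1$ and $\mathcal{D}_2$ are both nonempty and that their extremal elements are indeed complementary divisors — $1 \in \mathcal{D}_1$ and $P \in \mathcal{D}_2$ always, and $\max(\mathcal{D}_1)\cdot\min(\mathcal{D}_2) = P$ since no divisor lies strictly between consecutive complementary divisors.

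I expect the only genuine obstacle to be the discrete-versus-continuous gap: the real minimizer $\sqrt{P}$ need not be an integer or a divisor, so one cannot simply plug it in. The argument must carefully use the unimodality of $d + P/d$ together with the complementary-pair structure of divisors to show that \emph{no} divisor can beat the pair straddling $\sqrt{P}$ — in particular that moving from $\max(\mathcal{D}_1)$ to any smaller divisor strictly increases $g$ (by monotonicity on $(0,\sqrt{P}]$), and symmetrically on the other side. Everything else is routine substitution and bookkeeping.
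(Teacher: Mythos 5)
Your proposal is correct and follows essentially the same route as the paper's proof: substitute the constraint to reduce the problem to minimizing $d + P/d$ over divisors of $P$, use the monotonicity of $x + P/x$ on $(0,\sqrt{P}\,]$ (and the symmetry $d \leftrightarrow P/d$, which the paper handles by interchanging the roles of $N_1+1$ and $N_2$), and conclude that the complementary pair of divisors flanking $\sqrt{P}$ is optimal. The only cosmetic discrepancy is your remark that $1\in\mathcal{D}_1$ and $P\in\mathcal{D}_2$, which conflicts with the paper's convention that $d|P$ denotes a divisor with $1<d<P$; this is harmless since the theorem addresses composite $P$, for which both sets are nonempty anyway.
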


\begin{proof}
See Appendix \ref{app:theo1}.
\end{proof}

Theorem \ref{theo:mintrans} states that in the general case there are two optimal solutions (see Fig. \ref{fig:variations}). Note, however,  that although both solutions offer the same minimal number of transmissions, they are not equivalent. A nested transmission scheme for given $N_1$ and $N_2$ creates $N_2-1$ gaps of size $N_1$ where no Doppler pulse is sent and can be used for B-mode. Therefore, the choice of $N_1$ and $N_2$ has an influence on the B-mode imaging, leading to a trade-off depending on the specific application. For example, in coherent  plane-wave compounding \cite{montaldo2009coherent}, the size of the gap determines the number of inclination angles (i.e., image quality) while the number of gaps affects the image frame rate.  

 \begin{figure}[h]
 \centering
\begin{tabular}{ |t|c|c|c|s|s|c|c|c|} 
 \hline
$N_2$   & 1     &  2   & 4   & ${\bf 8}$   & ${\bf16}$ & 32 & 64 & 128 \\ \hline
$N_1$  & 127 & 63  & 31 & ${\bf 15}$ & ${\bf 7}$  & 3   & 1    & 0 \\ \hline 
$N$       & 128 & 65  & 35 & ${\bf23}$  &${\bf 23}$ & 35 & 65 & 128 \\ \hline
\end{tabular}
\caption{{\bf Nested Array Variations.} A summary of different variations of nested arrays for an observation window of size $P=128$. The two optimal solutions are highlighted in red. } 
\label{fig:variations}
\end{figure}

In the case where $\sqrt{P}$ is an integer we get that $\max (\mathcal{D}_1)=\min (\mathcal{D}_2)=\sqrt{P}$, leading to the following corollary:

\begin{corollary}
Assuming $\sqrt{P}\in\mathbb{N}^+$,  problem (\ref{eq:mintrans}) has a unique solution. The minimal number of Doppler pulse emissions and the optimum values for $N_1$ and $N_2$ are given by
\begin{equation}
N = 2\sqrt{P}-1,\quad N_1 = \sqrt{P}-1, N_2=\sqrt{P}.
\end{equation}
\label{corollary:mintrans}
\end{corollary}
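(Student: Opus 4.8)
The plan is to read off Corollary~\ref{corollary:mintrans} as the perfect-square specialization of Theorem~\ref{theo:mintrans}, and to add a short self-contained check of uniqueness. First I would record the elementary fact underlying everything: if $\sqrt P\in\mathbb{N}^{+}$ and $P>1$, then $\sqrt P$ is a divisor of $P$ with $1<\sqrt P<P$, so in the notation of Theorem~\ref{theo:mintrans} we have $\sqrt P\in\mathcal{D}_1\cap\mathcal{D}_2$.

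Next I would use this membership to collapse the two optimal solutions of Theorem~\ref{theo:mintrans} into a single one. Every element of $\mathcal{D}_1$ is at most $\sqrt P$ and $\sqrt P$ itself lies in $\mathcal{D}_1$, hence $\max(\mathcal{D}_1)=\sqrt P$; symmetrically every element of $\mathcal{D}_2$ is at least $\sqrt P$, so $\min(\mathcal{D}_2)=\sqrt P$. Substituting these values into the two formulas supplied by Theorem~\ref{theo:mintrans} gives $(N_1,N_2)=(\sqrt P-1,\sqrt P)$ in both cases, whence $N=N_1+N_2=2\sqrt P-1$.

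For uniqueness I would either invoke that Theorem~\ref{theo:mintrans} already exhibits all optimizers of (\ref{eq:mintrans}) — so their coincidence forces the minimizer to be unique — or, to keep the argument self-contained, give a one-line AM--GM proof: setting $a=N_1+1\ge 2$ and $b=N_2\ge 1$, the constraint in (\ref{eq:mintrans}) reads $ab=P$ and the objective is $a+b-1$; by AM--GM, $a+b\ge 2\sqrt{ab}=2\sqrt P$ with equality if and only if $a=b=\sqrt P$, and since $\sqrt P\in\mathbb{N}^{+}$ this equality point is a feasible integer pair. Hence $\min(N_1+N_2)=2\sqrt P-1$, attained only at $N_1=\sqrt P-1$, $N_2=\sqrt P$.

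I do not expect a genuine obstacle here; the corollary is essentially a direct reading of Theorem~\ref{theo:mintrans}. The only points requiring a little care are the degenerate bookkeeping — a perfect-square $P$ admitting a feasible nested pattern must satisfy $P\ge 4$ so that $N_1=\sqrt P-1\ge 1$ and $a\ge 2$ — and checking that the divisor convention $1<d<P$ used to define $\mathcal{D}_1,\mathcal{D}_2$ does include $\sqrt P$, which holds exactly because $1<\sqrt P<P$ when $P\ge 4$.
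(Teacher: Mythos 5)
Your proposal is correct and follows essentially the same route as the paper, which obtains the corollary by noting that when $\sqrt{P}\in\mathbb{N}^+$ one has $\max(\mathcal{D}_1)=\min(\mathcal{D}_2)=\sqrt{P}$, so the two optimizers of Theorem~\ref{theo:mintrans} coincide at $N_1=\sqrt{P}-1$, $N_2=\sqrt{P}$, giving $N=2\sqrt{P}-1$. Your additional AM--GM check of uniqueness is a harmless, self-contained supplement rather than a different approach.
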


Theorem \ref{theo:mintrans} along with Corollary \ref{corollary:mintrans} imply that when an observation window with size $P$ is required, the blood power spectrum can be reconstructed from only $\Theta(\sqrt{P})$ Doppler pulse emissions. For example, given an observation window with $P=256$, perfect spectrum recovery can be achieved from 31 Doppler transmissions, which is only $12\%$ of the number of pulses sent in a standard transmission scheme. This reduction in the number of transmissions is greater than any number  previously proposed by state-of-the-art methods.

\section{Reconstruction Methods}
\label{sec:recon}
We now consider two methods to reconstruct the blood power spectrum from sub-Nyquist slow-time samples obtained using the nested transmission scheme described in (\ref{eq:khtrao}). We begin by introducing practical considerations into our framework.

First, we need to compute the autocorrelation matrix from which the subsequent signal model is derived. We estimate it by averaging samples over neighboring depths
\begin{equation}
 {\bf \hat{R}}_{{\bf y}_\text{N}}=\sum_{k=1}^{Q} {\bf y}_\text{N}[k]{\bf y}_\text{N}^H[k],
 \label{eq:estconv}
\end{equation}
where $Q$ is proportional to $f_s/f_0$.
Since the signal covariance matrix is estimated from a finite number of snapshots $Q$, the Khatri-Rao product in (\ref{eq:khtrao}) is only an approximation. Moreover, we consider additive noise to the measurements, thus,  we modify (\ref{eq:main}) to
\begin{equation}
{\bf y}_\text{N}[k]={\bf A}_\text{N}\boldsymbol{\alpha}+{\bf w}[k],
\end{equation} 
where ${\bf w}[k]\in\mathbb{C}^{N\times 1}$ is zero mean white complex Gaussian noise with unknown covariance matrix $\sigma^2\bf I$, uncorrelated with the blood scatterers amplitudes. In this case, (\ref{eq:covmtx}) and (\ref{eq:khtrao}) become
\begin{gather}
{\bf \hat{R}}_{{\bf y}_\text{N}}\approx{\bf A_\text{N}R_{\boldsymbol{\alpha}}A}_\text{N}^H +\sigma^2{\bf I}_{N\times N},\\
{\bf r}_\text{N}=\text{vec}({\bf \hat{R}}_{{\bf y}_\text{N}})\approx\tilde{\bf A}{\bf p}+\sigma^2\text{vec}({\bf I}_{N\times N}). \label{eq:main2}
\end{gather}

Next, due to the repetition of elements of $\mathcal{D}$, we have redundancy in the system of equations defined in (\ref{eq:main2}), namely, some of the rows of $\bf A$ are identical. To reduce the system of equations and the effect of noise, we define for every $d\in\mathcal{D}_u$ the set $\mathcal{M}_d$ that collects all the indices where $d$ occurs in $\mathcal{D}$
\begin{equation}
\mathcal{M}_d=\{i\,|\,\mathcal{D}(i)=d\}.
\end{equation}
Then, we define a new vector ${\bf z}\in\mathbb{C}^{(2P-1)\times 1}$ given by
\begin{equation}
{\bf z}({i_d})=\frac{1}{|\mathcal{M}_d|}\sum_{i\in\mathcal{M}_d}{\bf r}_\text{N}(i),\quad d\in\mathcal{D}_u,
\label{eq:newmes}
\end{equation}   
where $i_d$  denotes the index of $d$ in $\mathcal{D}_u$ and $|\mathcal{M}_d|$ is the cardinality of $\mathcal{M}_d$, namely, the number of times $d$ occurs in $\mathcal{D}$. Writing (\ref{eq:newmes}) in vector form, we have
\begin{equation}
{\bf z} = \bar{{\bf A}}{\bf p}+\sigma^2\bar{\bf e},
\label{eq:basic}
\end{equation} 
where $\bar{{\bf e}}\in\mathbb{R}^{(2P-1)\times 1}$ is all zeros except a 1 at the $P$th position. The matrix $ \bar{{\bf A}}\in\mathbb{C}^{(2P-1)\times M}$ has entries $\bar{{\bf A}}(d,m)=\exp(2\pi jf_mp_dT)$  where $p_d$ is the $d$th element of $\mathcal{D}_u$. To solve (\ref{eq:basic}), we present two techniques which recover the blood spectrum $\bf p$. 

\subsection{Discrete Recovery}
\label{subsec:disc}
Suppose, as in standard Doppler methods, we limit ourselves to the Nyquist grid so that $f_mT=i_m/\tilde{P}$ for every $1\leq m\leq M$, where $i_m$ is an integer in the range $0\leq i_m \leq \tilde{P}-1$ and $\tilde{P}=2P-1$. Note that our grid is twice as dense as standard Doppler techniques so that our resolution is increased by a factor of 2.  In this case, $\bar{{\bf A}}={\bf F}^H\in\mathbb{C}^{\tilde{P}\times\tilde{P}}$ where $\bf F$ is the FFT matrix and we have
\begin{equation}
{\bf z} = {\bf F}^H{\bf p}+\sigma^2\bar{\bf e}.
\label{eq:fourier}
\end{equation}
By taking the Fourier transform of (\ref{eq:fourier}) scaled by $\tilde{P}$ and using the fact that ${\bf FF}^H=\tilde{P}{\bf I}$, we obtain
\begin{equation}
\tilde{\bf z}=\frac{1}{\tilde{P}}{\bf Fz} ={\bf p}+\frac{\sigma^2}{\tilde{P}}{\bf 1},
\label{eq:fftonz}
\end{equation}
where ${\bf 1}\in\mathbb{R}^{\tilde{P}\times 1}$ is a vector of all ones. 

Finally, we adopt ideas from denoising schemes presented in \cite{pal2014soft,donoho1995noising,pal2018guarantees} and employ a soft thresholding operator $\Gamma_\lambda(x)\triangleq \max(x-\lambda,0)$ on the spectral estimates, which decreases the noise variance and the  effect of spurious frequencies resulting from the finite sample averaging. Thus, our estimate of the blood spectrum is given by
\begin{equation}
\hat{\bf p}=\Gamma_\lambda(\tilde{\bf z}),  
\label{eq:finalest}
\end{equation}
where $\lambda\geq 0$ is determined empirically and can be tuned in real-time according to the clinician's desire.
The proposed technique is outlined in Algorithm \ref{alg:nest} and is referred to as Nested Slow-Time (NEST). 

Note that NEST differs from the estimator proposed in \cite{jensen2006spectral} since NEST is based on the nested transmission scheme. Namely, the subsampling strategy is crucial for successful recovery and not only the estimate itself. Furthermore, NEST consists of additional denoising step given by soft-thresholding, which leads to a better estimate of the autocorrelation function.
\begin{algorithm}
\caption{{\bf NE}sted {\bf S}low-{\bf T}ime (NEST)}
\label{alg:nest}
\begin{algorithmic}
\REQUIRE Nested samples $\{{\bf y}[k]\}_{k=1}^Q$, threshold $\lambda\geq 0$.
\STATE {\bf 1:}  Estimate ${\bf \hat{R}}_{{\bf y}_\text{N}}$ by (\ref{eq:estconv}).
\STATE {\bf 2:}  Form ${\bf r}_\text{N}=\text{vec}({\bf \hat{R}}_{{\bf y}_\text{N}})$. 
\STATE {\bf 3:}  Compute $\bf z$ using (\ref{eq:newmes}).
\STATE {\bf 4:}  Apply a Fourier transform:  $\tilde{\bf z}=\frac{1}{\tilde{P}}\bf Fz$ with $\tilde{P}=2P-1$.
\STATE {\bf 5:}  Apply soft-thresholding: ${\bf p}=\Gamma_\lambda\big(\tilde{\bf z}\big)$.
\ENSURE $\bf p$  - Blood power spectrum. 
\end{algorithmic}
\end{algorithm}

Given $N$, the complexity of NEST is $O(N^2Q+P\log P)$. For the minimal slow-time sampling rate $N^2\propto P$ the complexity is $O(PQ+P\log P)$, making NEST suitable for real-time implementation on commercial systems.

The properties of the difference set $\mathcal{D}_u$  are emphasized in NEST. In particular, the fact that $|\mathcal{D}_u|=2P-1$ allows to achieve spectral estimates with increased resolution of $2\pi/(2P-1)T$, almost twice the resolution of standard processing. Moreover, since the elements of $\mathcal{D}_u$ are a filled ULA, the matrix $\bar{\bf A}$ reduces to a  full FFT matrix, leading to an efficient implementation.

\subsection{Continuous Recovery}
\label{subsec:cont}
In reality, grid-based methods exhibit estimation errors since the true Doppler frequencies are unlikely to lie on a predefined grid, regardless of how finely it is defined \cite{chi2011sensitivity,pal2014grid}. To address this issue, we next provide a continuous recovery method which does not assume an underlying grid. This technique is based on the work in \cite{pal2014grid,pal2014gridless} and depends on the eigenspace of the covariance matrix. Following \cite{liu2015remarks}, we construct a matrix $\widetilde{\bf R}$ given by the following theorem, which shares the same eigenspace as the covariance matrix.

\begin{theorem}
Let $\widetilde{\bf R}$ be the following Toeplitz matrix
\begin{equation}
\widetilde{\bf R}\triangleq
\begin{pmatrix}
{\bf z}(P) & {\bf z}(P-1) & \dots & {\bf z}(1) \\
{\bf z}(P+1) & {\bf z}(P) & \dots & {\bf z}(2) \\
\vdots & \vdots & \ddots & \vdots \\
{\bf z}(2P-1) & {\bf z}(2P-2) & ... & {\bf z}(P)
\end{pmatrix}. 
\label{eq:Rmtx}
\end{equation}
For an infinite number of snapshots, the matrix $\widetilde{\bf R}$ can be expressed as
\begin{equation*}
\widetilde{\bf R}= {\bf AR_{\boldsymbol{\alpha}}A}^H+\sigma^2{\bf I}_{P\times P},
\end{equation*}
where  $\bf A$ and ${\bf R_{\boldsymbol{\alpha}}}$  are defined in (\ref{eq:y=Ax}) and (\ref{eq:R=ARA}) respectively.
\label{theo:mtxstruct}
\end{theorem}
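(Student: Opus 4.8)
The plan is to unwind the definition of $\mathbf{z}$ entry by entry and read off the claimed factorization directly. In the infinite-snapshot limit the sample covariance $\hat{\mathbf{R}}_{\mathbf{y}_\text{N}}$ converges to the true covariance $\mathbf{R}_{\mathbf{y}_\text{N}}={\bf A}_\text{N}{\bf R}_{\boldsymbol{\alpha}}{\bf A}_\text{N}^H+\sigma^2{\bf I}_{N\times N}$, so the approximations in (\ref{eq:main2}) and (\ref{eq:basic}) become exact equalities and $\mathbf{z}=\bar{\bf A}\mathbf{p}+\sigma^2\bar{\mathbf{e}}$ holds with equality. The first step is to make the indexing of $\mathbf{z}$ explicit. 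By Lemma \ref{lem:diffset} (using $N_2(N_1+1)=P$), the set $\mathcal{D}_u$ is the full integer interval $\{-(P-1),\dots,-1,0,1,\dots,P-1\}$; listing it in increasing order gives $p_d=d-P$ for $d=1,\dots,2P-1$, so the lag-zero entry sits at position $P$, consistent with $\bar{\mathbf{e}}$. Substituting $\bar{\bf A}(d,m)=\exp(2\pi jf_m p_d T)$ and $\mathbf{p}(m)=\sigma_m^2$ into (\ref{eq:basic}) yields, for every $d$,
\[
\mathbf{z}(d)=\sum_{m=1}^M\sigma_m^2\exp\!\big(2\pi jf_m(d-P)T\big)+\sigma^2\delta[d-P].
\]

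Next I would compute the generic entry of the Toeplitz matrix in (\ref{eq:Rmtx}). Reading off its band structure, the $(i,k)$ entry of $\widetilde{\bf R}$ is $\mathbf{z}(P+i-k)$ for $1\le i,k\le P$; plugging in the formula for $\mathbf{z}$ and simplifying $d-P=i-k$ gives $\widetilde{\bf R}(i,k)=\sum_{m}\sigma_m^2\exp(2\pi jf_m(i-k)T)+\sigma^2\delta[i-k]$. The key observation is that $\exp(2\pi jf_m(i-k)T)=\exp(2\pi jf_m(i-1)T)\exp(-2\pi jf_m(k-1)T)={\bf A}(i,m)\big({\bf A}(k,m)\big)^{\ast}$, where ${\bf A}$ is exactly the $P\times M$ Vandermonde matrix of (\ref{eq:y=Ax}) (with the convention that the $i$-th row corresponds to $p=i-1$). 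Summing against $\sigma_m^2={\bf R}_{\boldsymbol{\alpha}}(m,m)$ over $m$ reproduces $({\bf A}{\bf R}_{\boldsymbol{\alpha}}{\bf A}^H)(i,k)$ since ${\bf R}_{\boldsymbol{\alpha}}$ is diagonal, while $\sigma^2\delta[i-k]=\sigma^2({\bf I}_{P\times P})(i,k)$. As this holds for all $i,k$, we conclude $\widetilde{\bf R}={\bf A}{\bf R}_{\boldsymbol{\alpha}}{\bf A}^H+\sigma^2{\bf I}_{P\times P}$; in particular $\widetilde{\bf R}$ shares the column space (signal subspace) of ${\bf A}$, which is what the continuous recovery step requires.

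The argument is essentially bookkeeping, and the only place that needs care is aligning the ordering of $\mathcal{D}_u$ with the diagonal offsets of $\widetilde{\bf R}$ — equivalently, checking that the lag-zero entry of $\mathbf{z}$ genuinely carries the full $\sigma^2$ (rather than a fraction of it). This follows from (\ref{eq:newmes}): the index set $\mathcal{M}_0$ is exactly the $N$ diagonal positions of ${\bf R}_{\mathbf{y}_\text{N}}$, so averaging the corresponding entries of $\mathbf{r}_\text{N}$ gives $\frac{1}{N}\sum_{i}\big[({\bf A}_\text{N}{\bf R}_{\boldsymbol{\alpha}}{\bf A}_\text{N}^H)(i,i)+\sigma^2\big]=\sum_m\sigma_m^2+\sigma^2$, because every entry of ${\bf A}_\text{N}$ has unit modulus. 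With that check in place the substitution above is direct, so I expect no real obstacle beyond the careful indexing.
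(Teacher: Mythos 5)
Your proof is correct, and it is genuinely different in character from what the paper does: the paper offers no argument of its own for Theorem \ref{theo:mtxstruct}, deferring entirely to the citation \cite{liu2015remarks}, whereas you give a self-contained entrywise verification. Your bookkeeping is sound at the two places where it could go wrong. First, the indexing: by Lemma \ref{lem:diffset} with $N_2(N_1+1)=P$, the set $\mathcal{D}_u$ is the full interval $\{-(P-1),\dots,P-1\}$, and ordering it increasingly puts lag zero at index $P$, which is exactly consistent with the paper's definition of $\bar{\bf e}$ in (\ref{eq:basic}) and with the placement of ${\bf z}(P)$ on the diagonal of (\ref{eq:Rmtx}); your identification $\widetilde{\bf R}(i,k)={\bf z}(P+i-k)$ and the factorization $\exp(2\pi jf_m(i-k)T)={\bf A}(i,m){\bf A}(k,m)^\ast$ then give the claimed decomposition immediately, using only that ${\bf R}_{\boldsymbol{\alpha}}$ is diagonal. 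Second, the noise: your check that $\mathcal{M}_0$ consists of the $N$ diagonal positions of ${\bf R}_{{\bf y}_\text{N}}$, each carrying $\sum_m\sigma_m^2+\sigma^2$ because the entries of ${\bf A}_\text{N}$ have unit modulus, confirms that the averaging in (\ref{eq:newmes}) preserves the full $\sigma^2$ at lag zero rather than diluting it, so the $\sigma^2{\bf I}_{P\times P}$ term comes out exactly. What your route buys is transparency and independence from the external reference (which treats the general coarray setting); what the citation buys the paper is brevity and a pointer to the broader spatial-smoothing/coarray context. The only cosmetic caveat is that the infinite-snapshot limit implicitly assumes the usual normalization and ergodicity so that $\hat{\bf R}_{{\bf y}_\text{N}}$ converges to the true covariance, which is the same assumption the theorem statement itself makes.
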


\begin{proof}
See \cite{liu2015remarks}.
\end{proof}

Note that in practice we have a finite number of snapshots, hence, the structure of $\widetilde{\bf R}$ given by Theorem \ref{theo:mtxstruct} holds only approximately. Nevertheless, from Theorem \ref{theo:mtxstruct} it follows that in the absence of noise, the range space of $\widetilde{\bf R}$ is identical to that of $\bf A$. This special structure can be exploited to recover the Doppler frequencies by using subspace methods \cite{eldar2015sampling}. We now briefly describe the ESPRIT algorithm \cite{roy1989esprit,eldar2015sampling}, provided as a representative of subspace approaches.

Assuming $M$ is known, let ${\bf E}_M$ denote the matrix of size $P\times M$ consisting of the eigenvectors corresponding to the $M$ largest eigenvalues of $\widetilde{\bf R}$. Since the matrices $\bf A$ and ${\bf E}_M$ span the same space, there exists an invertible $M\times M$ matrix  $\bf T$ such that   
\begin{equation}
{\bf A}={\bf E}_M{\bf T}.
\label{eq:AET}
\end{equation}
Let ${\bf V}_1$ be the $P-1\times M$ matrix consisting of the first $P-1$ rows of $\bf A$, and let ${\bf V}_2$ be the $P-1\times M$ matrix consisting of the last $P-1$ rows of $\bf A$. Then, we have that
\begin{equation}
{\bf V}_2={\bf V}_1{\bf \Lambda},
\label{eq:V2V1}
\end{equation}
where ${\bf \Lambda}\in\mathbb{C}^{M\times M}$ is a diagonal matrix with entries ${\bf \Lambda}(m,m)=\exp(2\pi jf_mT)$. 
In addition, let ${\bf E}_1$ and ${\bf E}_2$ be equal to the first and last $P-1$ rows of ${\bf E}_M$ respectively.  From (\ref{eq:AET}), we get
\begin{align}
\begin{split}
{\bf V}_1={\bf E}_1{\bf T}, \\
{\bf V}_2={\bf E}_2{\bf T}.
\label{eq:VET}
\end{split}   
\end{align}
Combining (\ref{eq:V2V1}) and (\ref{eq:VET}) leads to the following relation between the matrices ${\bf E}_1$ and ${\bf E}_2$:
\begin{equation}
{\bf E}_2={\bf E}_1{\bf T\Lambda T}^{-1}.
\label{eq:E2}
\end{equation}
Assuming $M\leq P-1$, the matrix ${\bf E}_1$ is full column rank, therefore, ${\bf E}_1^\dagger{\bf E}_1={\bf I}$ where ${\bf E}_1^\dagger$ is the pseudo-inverse of ${\bf E}_1$. Multiplying (\ref{eq:E2}) on the left by ${\bf E}_1^\dagger$ leads to 
\begin{equation}
{\bf E}_1^\dagger{\bf E}_2={\bf T\Lambda T}^{-1}.
\label{eq:E1E2}
\end{equation}  
Following  (\ref{eq:E1E2}), we can recover the Doppler frequencies from the eigenvalues of ${\bf E}_1^\dagger{\bf E}_2$. 

ESPRIT requires knowledge of the number of Doppler frequencies $M$, which is typically unavailable to us. In practice, one can estimate $M$ using, for example, the minimum description length (MDL) algorithm \cite{roy1989esprit}. Here, we propose an alternative based on low rank approximation \cite{pal2014gridless}.

Let the eigen-decomposition of $\widetilde{\bf R}$ be given by
\begin{equation}
[{\bf E},{\bf d}]=\text{eig}(\widetilde{\bf R}),
\end{equation}
where $\bf E$ consists of the eigenvectors in its columns and $\bf d$ is a vector consisting of the eigenvalues in a non-increasing order. To promote low rank of the matrix $\widetilde{\bf R}$, we perform soft-thresholding on $\bf d$ and estimate $M$ as 
\begin{equation}
M=||\Gamma_\lambda({\bf d})||_0,
\end{equation}
where $\lambda\geq0$ is chosen empirically and $||\cdot||_0$ is the $l_0$ semi-norm which counts the number of nonzero elements of the vector.
This operation acts as a denoising scheme and accounts for the finite snapshot effect on the estimates. Given the estimate of $M$, we define ${\bf E}_M$ as the first $M$ columns of $\bf E$ and perform ESPRIT as described. 

Once the Doppler frequencies are recovered, the Vandermonde matrix $\bf \bar{A}$, defined in (\ref{eq:basic}), is constructed. Assuming $2P>M$ the matrix $\bar{\bf A}$ has full column rank and the blood spectrum vector $\bf p$ is then obtained by left inverting $\bar{\bf A}$,
\begin{equation}
\hat{\bf p}=\bar{\bf A}^\dagger{\bf z}.
\end{equation}
 The proposed recovery method is summarized in Algorithm\,\ref{alg:nesprit} and is referred to as NESPRIT.
 \begin{algorithm}
\caption{{\bf NE}sted {\bf S}low-Time ES{\bf PRIT} (NESPRIT)}
\label{alg:nesprit}
\begin{algorithmic}
\REQUIRE Nested samples $\{{\bf y}[k]\}_{k=1}^Q$, threshold $\lambda\geq 0$.
\STATE {\bf 1 :}  Estimate ${\bf \hat{R}}_{{\bf y}_\text{N}}$ by (\ref{eq:estconv}).
\STATE {\bf 2 :}  Form ${\bf r}_\text{N}=\text{vec}({\bf \hat{R}}_{{\bf y}_\text{N}})$. 
\STATE {\bf 3 :}  Compute $\bf z$ using (\ref{eq:newmes}).
\STATE {\bf 4 :}  Construct $\widetilde{\bf R}$ according to (\ref{eq:Rmtx}).
\STATE {\bf 5 :}  Decompose $\widetilde{\bf R}$ :  $[{\bf E},{\bf d}]=\text{eig}(\widetilde{\bf R})$.
\STATE {\bf 6 :} Estimate  $M=||\Gamma_\lambda(d)||_0$.
\STATE {\bf 7 :} Extract ${\bf E}_M=[{\bf e}_1,\dots,{\bf e}_M]$.
\STATE {\bf 8 :} Define ${\bf E}_1 \text{  and } {\bf E}_2$ as in (\ref{eq:VET}).
\STATE {\bf 9 :} Compute the eigenvalues of ${\bf E}_1^\dagger{\bf E}_2$: $\boldsymbol{\beta}=\text{eig}({\bf E}_1^\dagger{\bf E}_2)$.
\STATE {\bf 10:} Estimate the Doppler frequencies $\boldsymbol{f} = \frac{\angle\boldsymbol{\beta}}{2\pi T}$
\STATE {\bf 11:} Construct $\bar{\bf A}$ defined in (\ref{eq:basic}) using $\boldsymbol{f}$.
\STATE {\bf 12:} Spectrum recovery: ${\bf p}=\bar{\bf A}^\dagger{\bf z}$
\ENSURE $(\boldsymbol{f},{\bf p})$ - Blood power spectrum.
\end{algorithmic}
\end{algorithm}

The NESPRIT algorithm can theoretically exhibit infinite frequency-precision in identifying the Doppler frequencies when there is no noise. However, it has a large computational load. The complexity of NESPRIT is dominated by the eigen-decomposition of a $P\times P$ Hermitian matrix, which requires  $O(P^3)$ operations \cite{pan1999complexity}. Note, however, that more computationally efficient methods, presented in  \cite{xu1994fast}, may be used to reduce the complexity of traditional ESPRIT.

\subsection{Clutter Filtering and Apodization}
One major challenge in spectral Doppler is clutter filtering. Clutter signals stem from backscattered echoes from vessel’s walls and surrounding tissues, stationary and non-stationary, and are typically 40 to 60 dB stronger than the flow signal \cite{lediju2009sources,jensen1996estimation,hedrick1995ultrasound}. Thus, clutter may obscure blood velocities and must be removed for accurate velocity estimation. 

Conventionally, clutter removal is applied using high-pass finite impulse response (FIR) filters or infinite impulse response (IIR) filters. However, such filters assume uniformly sampled data, which is not the case when using sparse Doppler sequences. 
To overcome this, in \cite{jensen2006spectral,gudmundson2011blood}, polynomial regression filters were used for clutter rejection since they are not restricted to uniform sampling. The downside of regression filters is that they may lead to spurious frequencies in the output spectrum \cite{avdal2015effects,torp1997clutter}, compromising their reliability for clinical use.    

A crucial disadvantage of many sparse Doppler methods is their inability to use FIR and IIR filters for clutter removal. Fortunately, NEST and NESPRIT do not share this limitation, since they recover  the full uniform autocorrelation function, allowing to perform filtering in the correlation domain as we show next.  

Consider a linear time invariant (LTI) stable system with impulse response $h[n]$, driven by a WSS discrete process $x[n]$.  Denoting by $y[n]$ the output of the system and by $R_y[n]$   the autocorrelation function of $y[n]$, we have
\begin{align}
\begin{split}
y[n] &= x[n]\ast h[n] \\
R_y[n] &= R_x[n]\ast h[n]\ast h[-n],
\label{eq:filter}
\end{split}
\end{align}
where $R_x[n]$ is the autocorrelation function of the input and $\ast$ denotes convolution. Following (\ref{eq:filter}), any FIR or IIR filter $h[n]$ can be applied in the correlation domain by computing
\begin{equation}
\tilde{\bf z}[n]={\bf z}[n]\ast h[n] \ast h[-n],
\end{equation}  
where ${\bf z}[n]$ is given by (\ref{eq:newmes}). Thus, the fact that we recover the full uniform autocorrelation function allows us to perform clutter removal using any desired filter. In addition, specifically for NESPRIT,  which  involves an eigenvalue decomposition, eigen-based clutter filters \cite{alfred2010eigen} are directly applicable.

Similarly, any apodization  function $a[n]$, used for reducing sidelobes,  may be applied directly in the correlation domain by computing
\begin{equation}
\hat{\bf z}[n]= {\bf z}[n] \cdot R_a[n],{}
\end{equation} 
where ${\bf z}[n]$ is given by (\ref{eq:newmes}) and $R_a[n]$ is the autocorrelation function of $a[n]$.

\section{Alternative Sparse Arrays}
\label{sec:alternatives}
So far, we considered only nested arrays as an approach for reducing the number of Doppler transmissions. However, in the literature of array processing there are alternative sparse array configurations which can match the performance of their fully populated counterparts. In this section, we briefly review several alternatives and discuss their properties in comparison with nested arrays. 

\subsection{Super Nested}
A modified version of nested arrays are the super nested arrays \cite{liu2016superconf,liu2016super,liu2016high}. Assuming $N_1\geq 4$ and $N_2\geq 3$, super nested arrays are specified by the integer set $\mathcal{S}_\text{SN}$ created by concatenating six ULAs (see Fig. \ref{fig:patterns2}), defined by  
\begin{align}
\begin{split}
\mathcal{S}_\text{SN}&=\mathcal{X}_1\cup\mathcal{Y}_1\cup\mathcal{X}_2\cup\mathcal{Y}_2\cup\mathcal{Z}_1\cup\mathcal{Z}_2, \\
\mathcal{X}_1&= \{1+2l\,|\,0\leq l\leq A_1\},  \\
\mathcal{Y}_1&= \{(N_1+1)-(1+2l)\,|\,0\leq l\leq B_1\}, \\ 
\mathcal{X}_2&= \{(N_1+1)+(2+2l)\,|\,0\leq l\leq A_2\}, \\
\mathcal{Y}_2&= \{2(N_1+1)-(2+2l)\,|\,0\leq l\leq B_2\}, \\
\mathcal{Z}_1&= \{l(N_1+1)\,|\,2\leq l\leq N_2\}, \\
\mathcal{Z}_2&= \{N_2(N_1+1)-1\} \\
\end{split}
\end{align} 
with
\small
\begin{align*}
(A_1,B_1,A_2,B_2)=
\begin{cases}
(r,r-1,r-1,r-2),\quad &N_1=4r,\\
(r,r-1,r-1,r-1),\quad &N_1=4r+1,\\
(r+1,r-1,r-1,r-2),\quad &N_1=4r+2,\\
(r,r,r,r-1),\quad &N_1=4r+3,
\end{cases}
\end{align*}
\normalsize
where $r$ is an integer.

\begin{figure}[h]
\centering
\includegraphics[width=1\linewidth]{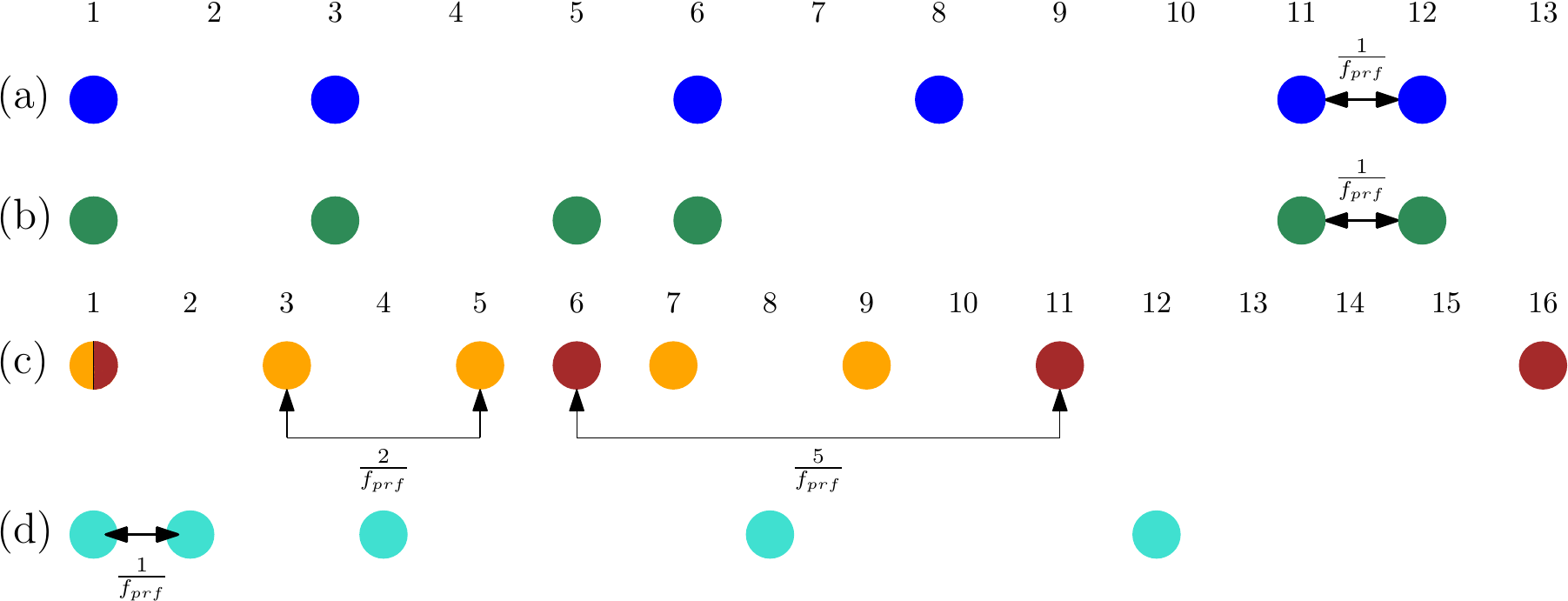}
\caption{{\bf Alternative Transmission Patterns.} Different transmission patterns for various observation windows. (a) Super nested pattern for $N_1=N_2=3,\,P=12$. (b) 3rd order super nested pattern for $N_1=N_2=3,\,P=12$. (c) Co-prime scheme for $N_1=2,N_2=5,\,P=11$, where a two color circle represents a single Doppler transmission that is mutual for both sub-arrays. (d) 3-level nested array for $N_1=N_2=1,\,N_3=3,\,P=12$. Matlab code for generating super nested arrays can be found in \cite{superMatlab}.} 
\label{fig:patterns2}
\end{figure}


These variants share the same properties as nested arrays in terms of the number of Doppler transmissions and their difference sets. In addition, they offer an advantage over nested arrays of reduced mutual coupling \cite{merrill2001introduction}, which in our case translates to the effect of previous transmissions on the received signal corresponding to the current emission. This property may allow increasing the maximal depth examined. However, super nested arrays exhibit complex geometry compared to nested arrays. In particular, the Doppler gaps created are not of the same size and thus using them for B-mode imaging may be difficult in certain applications.

\subsection{Co-Prime Array}
This type of sparse array has been studied extensively in the literature \cite{vaidyanathan2013coprime,vaidyanathan2011sparse,pal2011coprime,vaidyanathan2012direct,pal2012correlation,tan2014direction,vaidyanathan2011theory}. Let $N_1<N_2$ be co-prime integers, i.e., their greatest common divisor (gcd) is 1. A co-prime array is composed of two ULAs with inter-element spacing $N_1$ and $N_2$:
\begin{align}
\begin{split}
&\mathcal{S}_{\text{N}_1} = \{n_1N_2,\quad n_1=0,1,...,2N_1-1\}, \\
&\mathcal{S}_{\text{N}_2} = \{n_2N_1,\quad n_2=0,1,...,N_2-1\}, \\
&\mathcal{S}_\text{CP} = \{\mathcal{S}_{\text{N}_1}\cup\mathcal{S}_{\text{N}_2}\}.
\end{split}
\end{align}

By Lemma 1 in \cite{pal2011coprime}, the difference set of $\mathcal{S}_\text{CP}$ contains all $2N_1N_2+1$ contiguous integers from $-N_1N_2$ to $N_1N_2$. This means that for the choice of $N_1$ and $N_2$ such that $N_1N_2=P-1$, we can recover all time lags of the autocorrelation continuously from $-(P-1)$ to $P-1$ as in nested arrays.
In addition, a co-prime array has the property of reduced mutual coupling compared to a nested array, while having a simpler geometry compared to super nested arrays.



The main drawback of co-prime arrays is that they require sending Doppler pulses in times beyond the observation window, as can be seen in Fig. \ref{fig:patterns2}. Therefore, the reflected slow-time signal may not preserve its stationarity property, which is a key assumption in Doppler processing. To overcome this, we can limit ourselves to Doppler transmissions sent within the observation window. However, in this case, the difference set is not a filled ULA, i.e., not all time lags are recovered. As a result, this will reduce the number of DOF, namely, the number of velocities that are recoverable. 

\subsection{K-Level Nested Array}
The nested array concept is based on concatenating two ULAs. A \textit{K-Level nested array } is an extension to \textit{K} ULAs. This array is parameterized by $K,N_1,N_2,...,N_K\in\mathbb{N}^+$ and defined as follows:
\begin{align}
\begin{split}
&\mathcal{S}_1={1,2,...,N_1}, \\
&\mathcal{S}_i=\Big\{n\prod_{j=1}^{i-1}(N_j+1),\quad n=1,2,...,N_i\Big\}\quad i=2,3,...,K,\\
&\mathcal{S}_\text{KL}=\bigcup_{i=1}^K \mathcal{S}_i.
\end{split}
\end{align} 
The inter-element spacing in the $i$th level is equal to $N_{i-1}+1$ times the spacing in the
$(i-1)$th level, as illustrated in Fig.\,\ref{fig:patterns2}. 

To determine the minimal number of transmissions using this approach, we define a generalized version of problem (\ref{eq:mintrans}):
\begin{align}
\begin{split}
\underset{K\in\mathbb{N}^{+}}{\min}\quad\underset{N_1,...,N_K\in\mathbb{N}^{+}}{\min}\qquad &\sum_{i=1}^{K} N_i \\ 
\text{subject to}\qquad\qquad N_k&\prod_{i=1}^{K-1}(N_i+1)=P.
\end{split}
\label{eq:kmintrans}
\end{align}
The solution to (\ref{eq:kmintrans}) is given by the following theorem.
\begin{theorem}
Let $P$ be the size of a given observation window, represented by its prime factorization 
\begin{equation*}
P=\prod_{i=1}^{\omega}p_i^{q_i},
\end{equation*}
where $\omega$ is the number of distinct prime factors of $P$. Define $\Omega=\sum_{i=1}^{\omega}q_i$. The optimal
number of nesting levels $K$ and the minimal number of transmissions are given by
\begin{align*}
&K = \Omega, \\
&N =  1+\sum_{i=1}^{\omega}(p_i-1)q_i, \\
&\Big\{N_{i}\Big\}_{i=1}^{\Omega} =\Big\{ \underset{q_1\text{times}}{\underbrace{p_1-1,...,p_1-1}},...,\underset{q_{\omega}-1\text{ times}}{\underbrace{p_{\omega}-1,...,p_{\omega}-1}},p_{\omega}\Big\}.
\end{align*}
\label{theo:kmintrans}
\end{theorem}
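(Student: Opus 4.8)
The plan is to reduce the combinatorial problem (\ref{eq:kmintrans}) to an elementary optimization over multiplicative factorizations of $P$. For a feasible tuple $(K,N_1,\dots,N_K)$, substitute $b_i=N_i+1$ for $1\le i\le K-1$ and $b_K=N_K$; the constraint $N_K\prod_{i=1}^{K-1}(N_i+1)=P$ then reads $\prod_{i=1}^{K}b_i=P$, and the objective becomes $\sum_{i=1}^{K}N_i=\big(\sum_{i=1}^{K}b_i\big)-(K-1)$, where $b_i\ge 2$ for $i<K$ while $b_K=N_K\ge 1$. First I would remove the degenerate case $b_K=1$: if $N_K=1$ and $K\ge 2$, then $(K-1,\,N_1,\dots,N_{K-2},\,N_{K-1}+1)$ is again feasible, because $(N_{K-1}+1)\prod_{i=1}^{K-2}(N_i+1)=\prod_{i=1}^{K-1}(N_i+1)=P$, and it has the same objective value; hence the optimal value is unchanged if we restrict to $b_K\ge 2$, i.e. to factorizations in which \emph{every} factor is $\ge 2$ (for $K=1$ this is automatic, since $b_1=N_1=P\ge 2$). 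So (\ref{eq:kmintrans}) is equivalent to minimizing $\varphi\triangleq\big(\sum_{i=1}^{K}b_i\big)-K+1$ over all factorizations $P=b_1b_2\cdots b_K$ into integers $b_i\ge 2$, the order of the factors being irrelevant to both $\varphi$ and the constraint.

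The heart of the argument is the claim that $\varphi$ is minimized precisely by the prime factorization of $P$, with minimal value $1+\sum_{i=1}^{\omega}(p_i-1)q_i$. Indeed, if some factor $b_j$ is composite, write $b_j=uv$ with $u,v\ge 2$ and split it, replacing the factorization by one with the two factors $u,v$ in place of $b_j$; this adds one factor and changes $\varphi$ by $(u+v-uv)-1=-(u-1)(v-1)\le -1$, a strict decrease. Since a factorization of $P$ into parts $\ge 2$ has at most $\Omega=\sum_{i=1}^{\omega}q_i$ parts --- each part absorbing at least one prime of $P$ --- this splitting process must terminate, and it can only terminate at the prime factorization, which has exactly $\Omega$ parts and $\varphi=\big(\sum_{i=1}^{\omega}p_iq_i\big)-\Omega+1=1+\sum_{i=1}^{\omega}(p_i-1)q_i$. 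Therefore $\varphi\ge 1+\sum_{i=1}^{\omega}(p_i-1)q_i$ for every feasible point, with equality exactly when $\{b_1,\dots,b_K\}$ is the multiset of primes of $P$, in which case $K=\Omega$.

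It then remains to read off an optimizer in the form stated. Taking $K=\Omega$ and letting the last factor be a prime, $b_K=p_\omega$ (allowed since $p_\omega\ge 2$), so that $N_K=p_\omega$, the remaining factors $\{b_1,\dots,b_{K-1}\}$ must be $p_i$ with multiplicity $q_i$ for $i<\omega$ and $p_\omega$ with multiplicity $q_\omega-1$; hence $N_i=b_i-1$ for $1\le i\le K-1$ runs through $p_i-1$ ($q_i$ times, $i<\omega$) and $p_\omega-1$ ($q_\omega-1$ times), in any order, which is exactly the list $\{N_i\}_{i=1}^{\Omega}$ in the theorem, with $N=1+\sum_{i=1}^{\omega}(p_i-1)q_i$ and $K=\Omega$. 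The main obstacle is the asymmetric role of $N_K$ in the constraint (it appears as $N_K$, not $N_K+1$): this asymmetry must be normalized away, as in the first paragraph, before the clean factorization picture emerges, and it is also why $K=\Omega$ is, strictly speaking, only the \emph{smallest} optimal number of levels --- one can append a redundant level with $N_K=1$ to inflate $K$ to $\Omega+1$ at no change in cost --- becoming the unique optimal value once one additionally requires $N_K\ge 2$ so that the top ULA is non-degenerate. Everything beyond this normalization reduces to the one-line "splitting" estimate of the second paragraph.
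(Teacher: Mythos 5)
Your proof is correct and follows essentially the same route as the paper's: the same change of variables ($b_i=N_i+1$ for $i<K$, $b_K=N_K$), the same normalization of the degenerate $N_K=1$ case, and the same key step of splitting a composite factor to strictly decrease the objective, forcing the optimum onto the prime factorization of $P$. Your explicit decrease $-(u-1)(v-1)\le -1$ and the termination/counting argument ($K\le\Omega$) are just a slightly more quantitative phrasing of the paper's contradiction argument, and your remark on the $N_K\ge 2$ normalization matches the paper's opening assumption $N_K>1$.
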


\begin{proof}
See Appendix \ref{app:theo3}.
\end{proof}

A common choice for $P$ is a power of two. The optimal K-level nested array in this case is given by the following corollary:

\begin{corollary}
Consider an observation window of size $P=2^n$ for some $n\in\mathbb{N}^+$. The optimal transmission pattern consists of $n+1$ emissions with exponential spacing, given by the set
\begin{equation*}
\mathcal{S}_\text{opt}=\{1,2,4,....2^n\}.
\end{equation*} 
\end{corollary}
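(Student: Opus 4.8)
The plan is to apply Theorem~\ref{theo:kmintrans} directly to the special case $P = 2^n$ and verify that the claimed set $\mathcal{S}_\text{opt}$ is exactly the $K$-level nested array produced by the optimal parameters. First I would write out the prime factorization of $P = 2^n$: here $\omega = 1$, $p_1 = 2$, and $q_1 = n$, so $\Omega = \sum_{i=1}^{\omega} q_i = n$. Plugging into the formulas of Theorem~\ref{theo:kmintrans}, the optimal number of nesting levels is $K = \Omega = n$, and the minimal number of transmissions is
\begin{equation*}
N = 1 + \sum_{i=1}^{\omega}(p_i-1)q_i = 1 + (2-1)\cdot n = n+1,
\end{equation*}
which matches the corollary's claim of $n+1$ emissions.

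Next I would identify the sequence $\{N_i\}_{i=1}^{\Omega}$. From Theorem~\ref{theo:kmintrans}, with a single prime $p_1 = 2$ appearing $q_1 = n$ times, the list consists of $p_1 - 1 = 1$ repeated $q_1 - 1 = n-1$ times, followed by $p_\omega = p_1 = 2$; that is, $\{N_i\}_{i=1}^{n} = \{1, 1, \dots, 1, 2\}$ with $n-1$ ones and a final $2$. The only subtlety is to confirm this sequence indeed satisfies the feasibility constraint $N_K \prod_{i=1}^{K-1}(N_i+1) = P$: here $N_K = 2$ and $\prod_{i=1}^{n-1}(N_i+1) = \prod_{i=1}^{n-1} 2 = 2^{n-1}$, so $N_K \prod_{i=1}^{K-1}(N_i+1) = 2 \cdot 2^{n-1} = 2^n = P$, as required.

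Finally I would compute the actual transmission set $\mathcal{S}_\text{KL}$ for these parameters using the definition of the $K$-level nested array. For level $i=1$ we get $\mathcal{S}_1 = \{1, 2, \dots, N_1\} = \{1, 2\}$. For level $i$ with $2 \leq i \leq n-1$, the spacing factor is $\prod_{j=1}^{i-1}(N_j+1) = 2^{i-1}$ and $N_i = 1$, so $\mathcal{S}_i = \{1 \cdot 2^{i-1}\} = \{2^{i-1}\}$, which for $i = 2, \dots, n-1$ contributes the elements $\{2, 4, \dots, 2^{n-2}\}$. For the final level $i = K = n$, the spacing is $\prod_{j=1}^{n-1}(N_j+1) = 2^{n-1}$ and $N_n = 2$, so $\mathcal{S}_n = \{1 \cdot 2^{n-1}, 2 \cdot 2^{n-1}\} = \{2^{n-1}, 2^n\}$. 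Taking the union $\mathcal{S}_\text{KL} = \bigcup_{i=1}^{n}\mathcal{S}_i = \{1, 2, 4, \dots, 2^{n-1}, 2^n\} = \mathcal{S}_\text{opt}$, which proves the claim.

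The main obstacle here is essentially bookkeeping rather than conceptual difficulty: one must be careful with the off-by-one in the index ranges (the first level carries $N_1 = 2$ elements $\{1,2\}$ while intermediate levels carry a single element each, and the last level again carries two) and check that the overlapping elements across levels collapse correctly so that the union is exactly the dyadic set $\{2^0, 2^1, \dots, 2^n\}$ with $n+1$ distinct members. Since everything reduces to Theorem~\ref{theo:kmintrans} specialized to a single prime, there is no genuinely hard step once that theorem is granted.
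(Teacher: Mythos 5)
Your overall route is exactly the intended one: the corollary is a direct specialization of Theorem~\ref{theo:kmintrans} to $\omega=1$, $p_1=2$, $q_1=n$ (giving $K=\Omega=n$, $N=n+1$, $\{N_i\}_{i=1}^{n}=\{1,1,\dots,1,2\}$, with the feasibility check $2\cdot 2^{n-1}=2^n$), and the paper offers no separate argument, so this matches the paper's derivation. One bookkeeping slip, though: having correctly identified $N_1=1$, the first level is $\mathcal{S}_1=\{1,\dots,N_1\}=\{1\}$, not $\{1,2\}$; the element $2$ enters through $\mathcal{S}_2=\{1\cdot(N_1+1)\}=\{2\}$. Your closing remark that the first level carries $N_1=2$ elements contradicts your own sequence $\{1,\dots,1,2\}$ (and would give $\sum_i N_i=n+2$ emissions rather than the $N=n+1$ you computed, besides changing all later spacings to multiples of $3$). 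The final union $\{1,2,4,\dots,2^n\}$ happens to be unaffected because the spurious $2$ coincides with the element of $\mathcal{S}_2$, so the stated conclusion still stands once $\mathcal{S}_1$ is corrected.
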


Nested arrays are associated with second-order statistics while K-level nested arrays extend this notion to higher-order statistics. For example, 4-level nested arrays are related to differences of the difference set, i.e. 4th-order moments. Thus, if we consider higher-order statistics, then Theorem \ref{theo:kmintrans} implies that K-level nested arrays offer a significant reduction in the number of Doppler transmissions over nested arrays. However, using higher-order statistics requires a large number  of snapshots, which may not be available. 

\section{Simulations and In Vivo Results}
\label{sec:sim}

  We now demonstrate blood spectrum reconstruction from sparse slow-time samples. The NEST and NESPRIT algorithms are evaluated using Field II \cite{jensen1996field,jensen1992calculation} simulations with  the Womersley model \cite{womersley1955xxiv} for pulsating flow from the femoral artery. The specific parameters for the Field II simulation of the flow are summarized in Table \ref{table:simparams}. The estimation of the autocorrelation matrix was performed using $Q = 33$ regularly spaced samples along depth and involved subtraction of the mean of the signal, thus removing the signal's stationary part.

\begin{table}
\centering
\begin{tabular}{l c  c  } 
 \hline
 Transducer center frequency  & $f_0$ & 3.5  [MHz]  \\ 
 Pulse repetition frequency & $f_{prf}$ & 5 [kHz]  \\
 Sampling frequency & $f_s$ & 20 [MHz]  \\
 Speed of sound & $c$ & 1540 [m/s]  \\
 Mean velocity &  & 0.1 [m/s] \\
Beam/flow angle &  & $60^o$ \\
Observation window size & $P$ & 256 \\
 \hline
\end{tabular}
\caption{Parameters for femoral flow simulation}
\label{table:simparams}
\end{table}

\subsection{MSE versus SNR}
\label{subsec:mse}
First, we evaluate the performance of the proposed algorithms by using a simplified signal simulated according to (\ref{eq:y_kp2}), comprising a single Doppler frequency which does not lie on the grid of standard processing. We consider an observation window of size $P=8$ and a nested transmission scheme where $N_1=3,N_2=2$ and $T=1$. Assuming a Doppler frequency $f=3/15=0.2$, we compare NEST, NESPRIT and Welch's method by studying the mean squared error (MSE) of their frequency estimates as a function of signal-to-noise ratio (SNR). We define the MSE of an estimate  $\hat{f}$
as 
\begin{equation}
MSE(\hat{f})=\mathbb{E}[(f-\hat{f})^2],
\end{equation}
where $\mathbb{E}[\cdot]$ is the expectation operator evaluated empirically using 1000 Monte Carlo simulations.

Figure \ref{fig:mse} shows the MSE of the three methods as a function of SNR for $Q=200$ snapshots. Notice how the performance of the three methods improves considerably with increasing SNR. In low SNR regimes NEST performs the worst while the performance of NESPRIT and Welch's method are comparable. However, while from a certain point both NEST and NESPRIT recover the Doppler frequency perfectly, Welch's method still produces an error even in the high SNR regime. This is expected due to the limited Doppler resolution of Welch's method compared to NEST and NESPRIT.

\begin{figure}[h]
\centering
\includegraphics[width=1\linewidth]{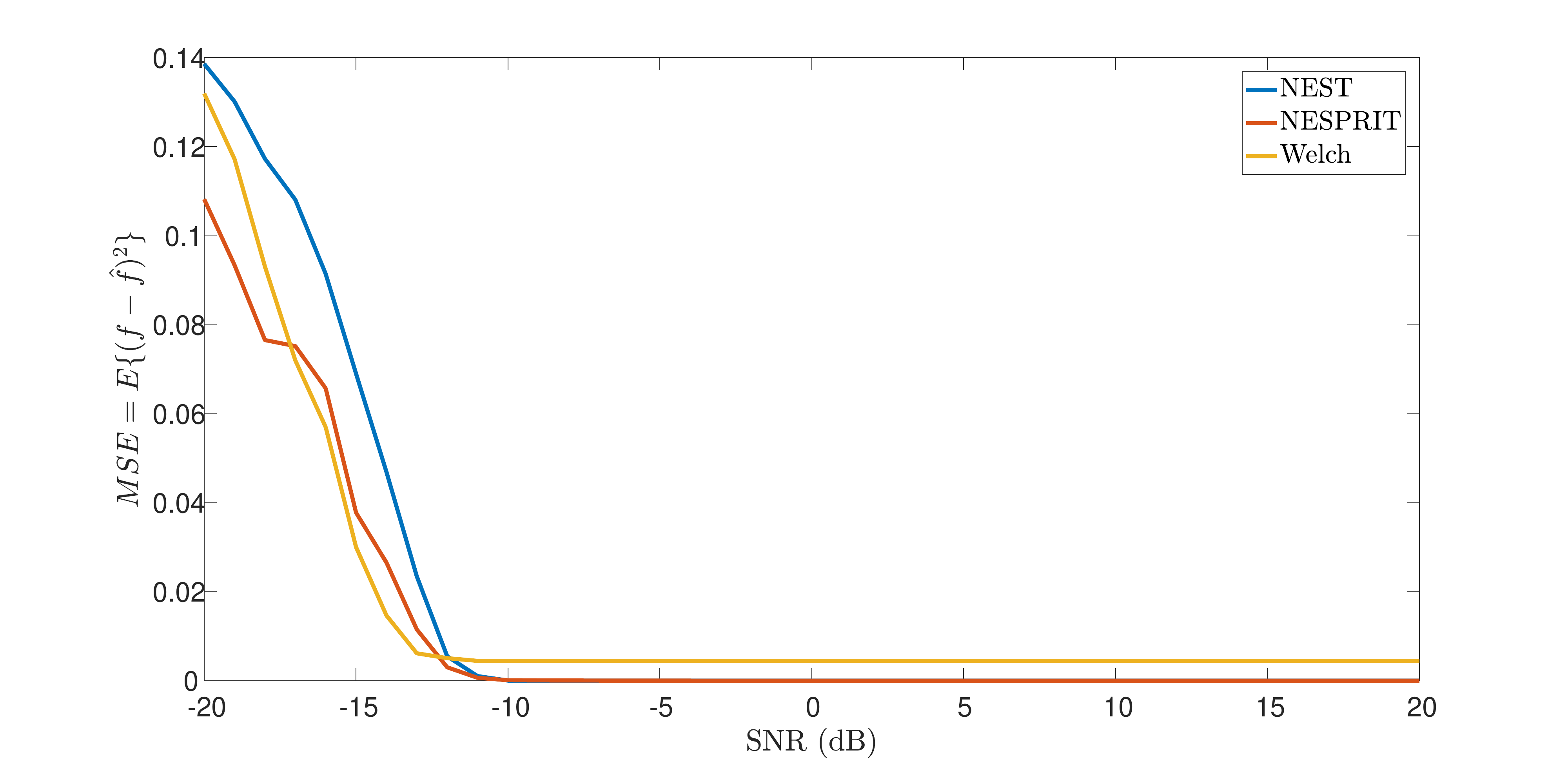}
\caption{{\bf MSE versus SNR.} MSE as a function of SNR (for a single Doppler frequency) of NEST and NESPRIT methods applied for  a nested transmission scheme with $N_1=3,\,N_2=2$ and $Q=33$.} 
\label{fig:mse}
\end{figure}

\subsection{Different Slow-Time Subsampling Levels}
\label{subsec:samplinglevel}
We now investigate the spectrum recovery  of NEST and NESPRIT using the proposed sparse transmission scheme with different levels of slow-time subsampling, i.e., different number of Doppler emissions:
\begin{enumerate}
\item $N=129\,(\approx 50.3\%): N_1=127,N_2=2$
\item $N=67\,(\approx 26.1\%)\,\,\,: N_1=63,N_2=4$
\item $N=39 \,(\approx 15.2\%)\,\,\,: N_1=31,N_2=8.$
\end{enumerate}

Figure\,\ref{fig:subsampling} shows the spectrogram of traditional Welch’s method and the ones obtained with NEST (top) and NESPRIT (bottom) using 50.3\%, 26.1\% and 15.2\% of possible Doppler transmissions. As can be seen, for all levels of subsampling both proposed algorithms produce a clear and accurate spectrogram. This allows the user the freedom to vary $N_1$ and $N_2$ and thus determine the level of subsampling dynamically.

\begin{figure*}
\centering
\includegraphics[trim={3cm 4cm 3cm 4.5cm},clip,width=1\linewidth]{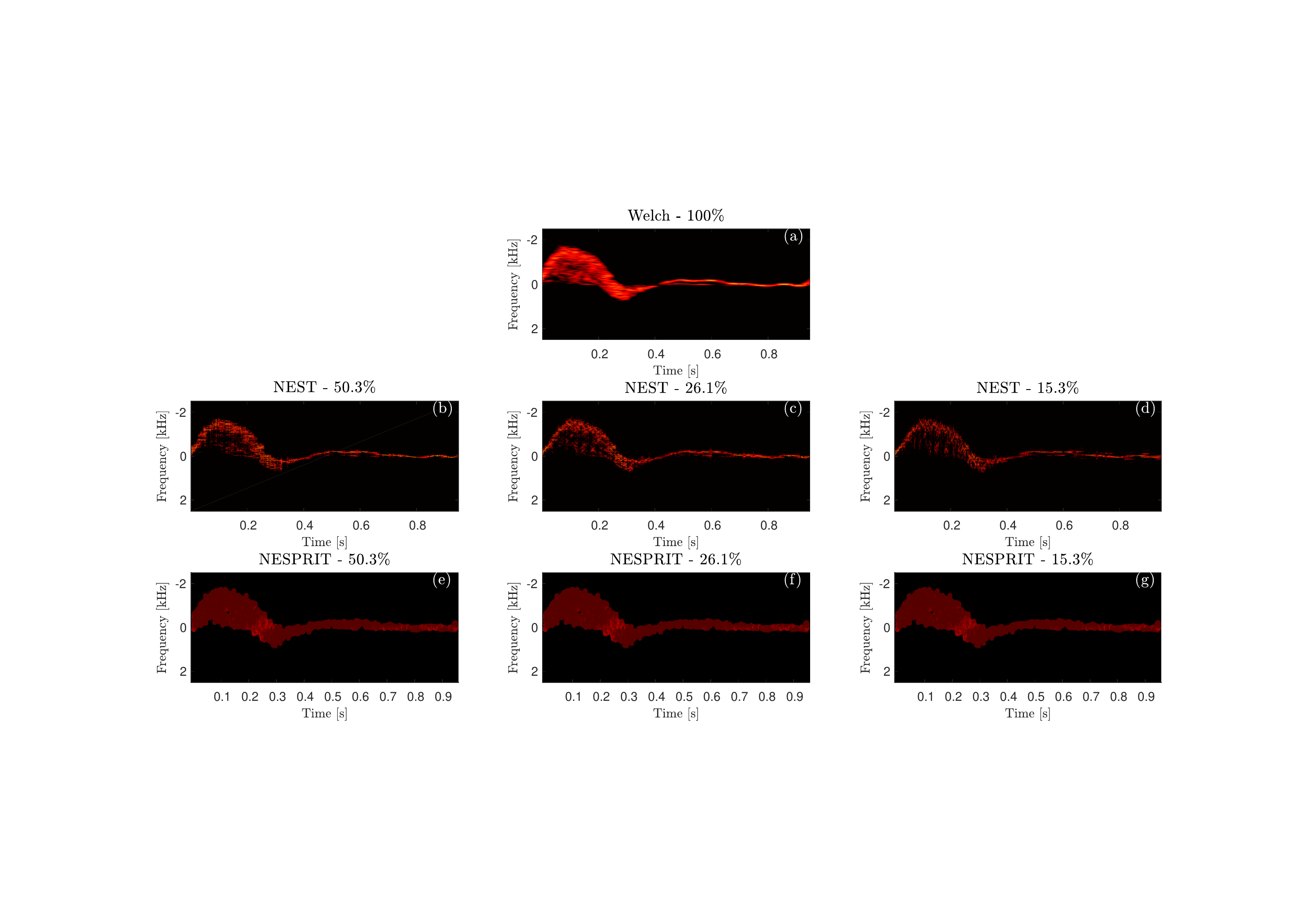}
\caption{{\bf Different Subsampling Levels.} Spectrograms of the simulated femoral artery using different slow-time subsampling from 256 pulses (100\%) down to 39 pulses ($\approx$15\%). (a) Welch's method - 100\% (b) NEST - $50.3\%$ (c) NEST - $26.1\%$ (d) NEST - $15.2\%$ (e) NESPRIT - $50.3\%$ (f) NESPRIT - $26.1\%$ (g) NESPRIT - $15.2\%$. All spectrograms are displayed with a dynamic range of 60 dB.}
\label{fig:subsampling}
\end{figure*}

\subsection{Clutter Filter and Apodization}
Next we demonstrate the application of clutter filtering and apodization using NEST and NESPRIT techniques. To that end, a clutter signal was superimposed on the flow model being 40 dB stronger than the blood signal. We use a Butterworth high pass filter with normalized cutoff frequency 0.03 and apodization with a Hamming window of length 256. Recall that these actions are performed on the autocorrelation signal given by (\ref{eq:newmes}). 

Figure \ref{fig:clutter} presents spectrograms of NEST (top) and NESPRIT (bottom) reconstructed from approximately 25\% of the Doppler emissions. On the left side the resultant unfiltered spectrograms are given. As seen, only the frequency related to the clutter signal  is visible, since the clutter obscures the blood velocities entirely. Applying a high pass filter on the autocorrelation signal produces adequate spectrograms (middle) where clearly the low frequencies are filtered out. As expected, there are artifacts due to the fact that the filtering is not ideal and part of the blood signal is also filtered out along with the clutter. Using Hamming apodization helps in reducing these artifacts, yielding cleaner spectrograms (right). 

These last results emphasize the importance of recovering the slow-time autocorrelation which allows to incorporate any conventional clutter filter and apodization in NEST and NESPRIT.  
\label{subsec:clutter}

\begin{figure*}
\centering
\includegraphics[trim={3cm 1cm 3cm 0.5cm},clip,width=1\linewidth,height=8cm]{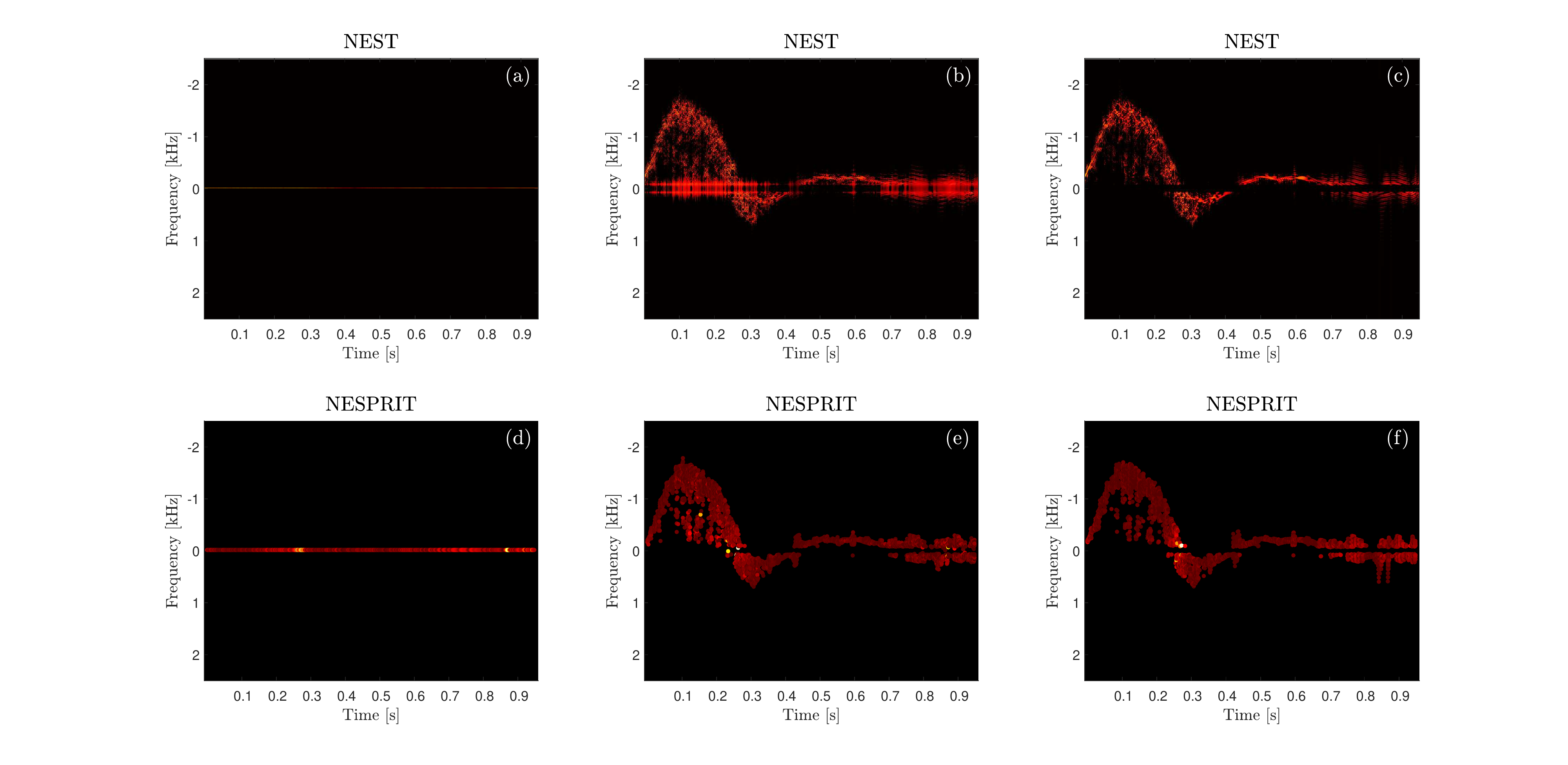}
\caption{{\bf Clutter Filtering and Apodization.} Spectrograms of the simulated femoral artery with superimposed clutter signal. (a) NEST with no filter (b) NEST with high pass filter (c) NEST with high pass filter and Hamming apodization (d) NESPRIT with no filter (e) NESPRIT  with high pass filter (f) NESPRIT with high pass filter and Hamming apodization. All spectrograms reconstructed using only 67 transmissions (25\%) and displayed with a dynamic range of 60 dB. }
\label{fig:clutter}
\end{figure*}

\subsection{Alternative Sampling Patterns}
\label{subsec:alternatives}
Here we examine other transmission schemes reviewed in Section \ref{sec:alternatives}. In Fig. \ref{fig:alternatives} the spectrograms recovered by NEST (top) and NESPRIT (bottom) are presented, where the input vector was acquired in each setting according to a different transmit pattern - super-nested (left), co-prime (middle), 4-level nested (right).  The parameters of each emission scheme are presented in Table \ref{table:alterparams}.  
As can be seen in Fig. \ref{fig:alternatives}, for co-prime and 4-level nested patterns, NEST and NESPRIT failed to produce clear spectrograms and exhibit severe artifacts. This is expected since when using these transmit schemes the resulting autocorrelations have holes, leading to aliasing which is dramatic especially when the spectrum consists of  a wide range of frequencies. Note that for the 4-level nested scheme, NESPRIT failed to produce a visible spectrogram and hence is not shown. Moreover, the spectrograms resulting from the super-nested pattern, although clear, exhibit aliasing which is surprising because the super-nested approach shares the nested pattern property of having a full autocorrelation. This aliasing is probably due to the fact that in super-nested transmission there is only one pair of transmissions separated in time by $T$, which may lead to inaccurate estimation of lag one of the autocorrelation, effectively reducing the PRF by a factor of 2.

\begin{table}
\centering
\begin{tabular}{l c } 
 \hline
 Super-nested   &  $N_1=15$ $N_2=16$  \\ 
 Co-prime           &  $N_1=14$ $N_2=9$  \\
 4-Level nested &  $N_1=N_2=N_3=3$ $N_4=4$  \\
 \hline
\end{tabular}
\caption{Parameters for different transmit patterns}
\label{table:alterparams}
\end{table}

\begin{figure*}
\centering
\includegraphics[trim={3cm 1cm 3cm 0cm},clip,width=1\linewidth,height=8cm]{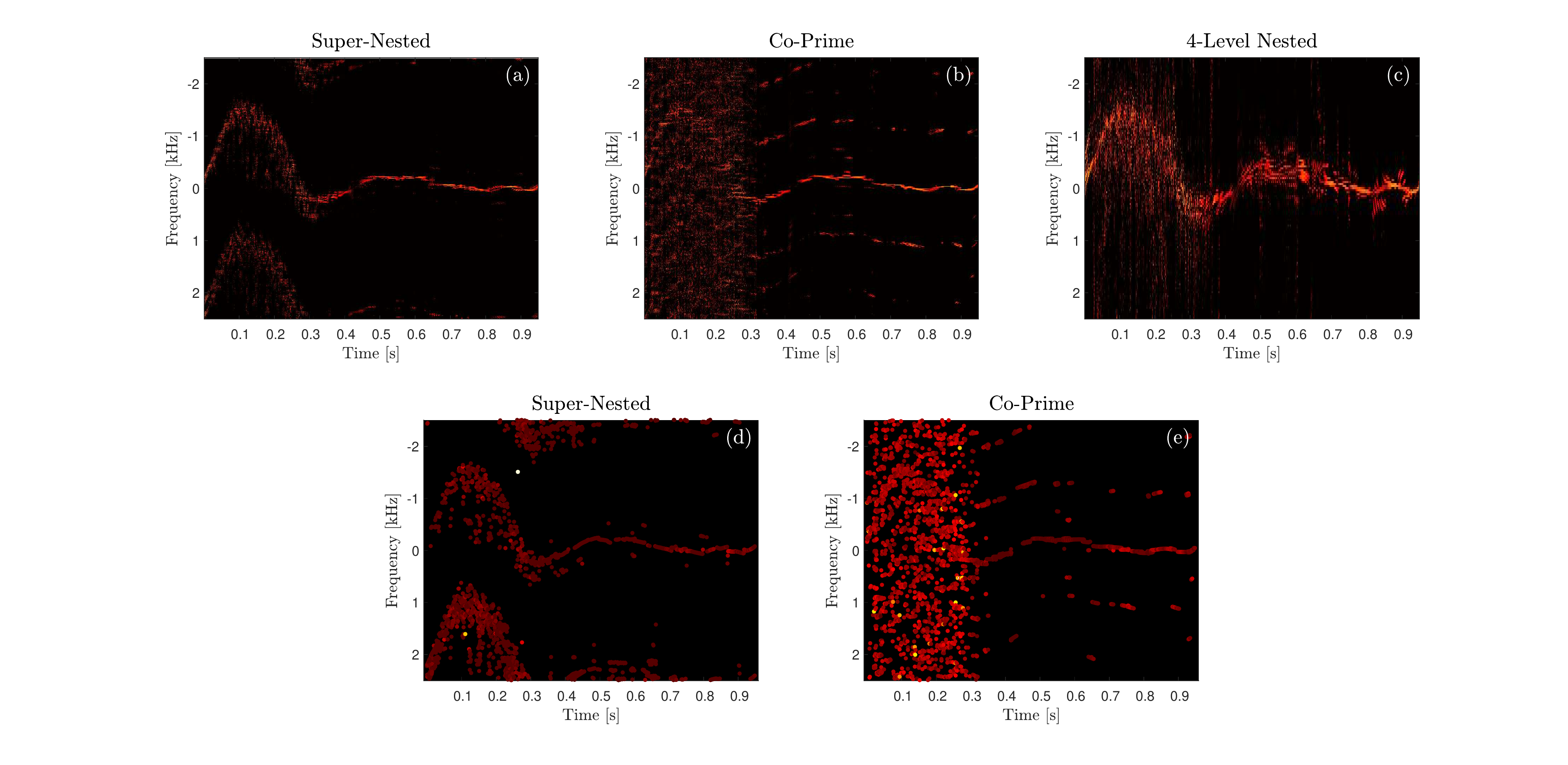}
\caption{{\bf Alternative Transmission Schemes.} Spectrograms of the simulated femoral artery using alternative emission patterns. (a) Super-Nested scheme with NEST recovery (b) Co-Prime scheme with NEST recovery (c) 4-Level Nested with NEST recovery (d) Super-Nested scheme with NESPRIT (e) Co-Prime scheme with NESPRIT recovery. All spectrograms are displayed with a dynamic range of 60 dB.}
\label{fig:alternatives}
\end{figure*}

\subsection{Minimal Rate Performance}
\label{subsec:minrate}
As a final simulation, we test the performance of both NEST and NESPRIT for the minimal slow-time sampling rate. According to the nested approach, for an observation window of size $P=256$ the minimal number of Doppler transmissions is $2\sqrt{P}-1=31$, which is 12\%  of 256. Based on this subsampling scheme, the proposed techniques are compared with the conventional Welch's method and with two recent developed techniques BSLIM and BIAA which can handle arbitrary sampling schemes of the slow-time data. The resulting spectrograms are shown in Fig. \ref{fig:minrate}. As can be seen, the blood spectrograms formed by NEST and NESPRIT are sharp and clear, whereas, Welch's method, BSLIM and BIAA produce spectrograms with significant artifacts, especially in regions of high velocities due to aliasing. These last results prove that NEST and NESPRIT, based on the proposed transmission scheme, are able to fully recover the blood spectrum only from 12\%. This along with the fact that NEST and NESPRIT present a closed form solution, in contrast to other competitive methods, indicate that NEST and NESPRIT outperform current state-of-the-art techniques. 

\begin{figure*}
\centering
\includegraphics[trim={3cm 1cm 3cm 0cm},clip,width=1\linewidth,height=8cm]{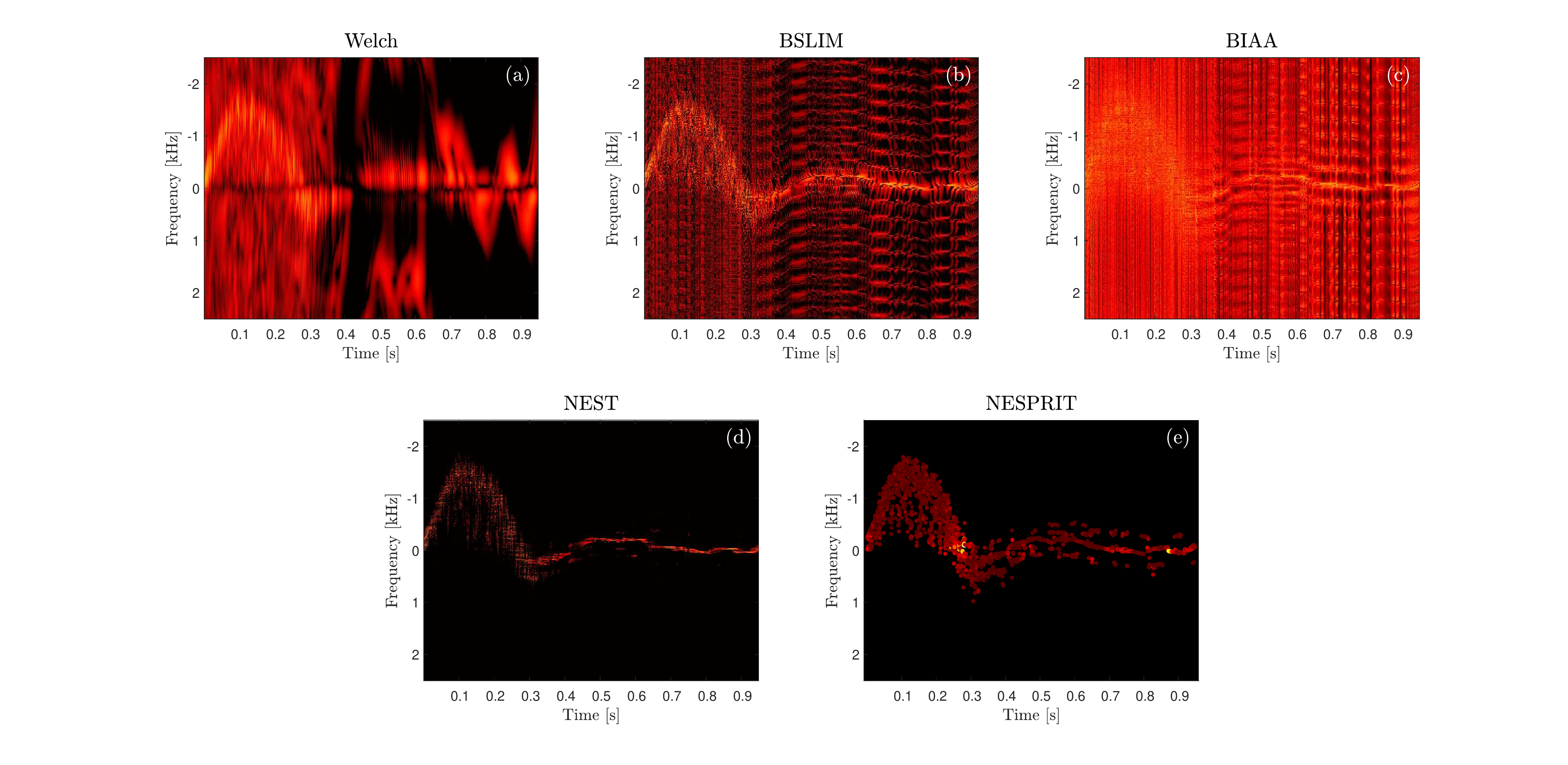}
\caption{{\bf Performance Comparison for Minimal Rate.} Spectrograms of the simulated femoral artery using only 31 pulses out of 256 (12\%) according to the sparse emission scheme. (a) Welch's method (b) BSLIM (c) BIAA (d) NEST (e) NESPRIT. All spectrograms are displayed with a dynamic range of 60 dB.}
\label{fig:minrate}
\end{figure*}

\subsection{In vivo}
\label{subsec:invivo}

We end by evaluating the performance of the proposed methods on in vivo data obtained online\footnote[1]{The data was downloaded from http://bme.elektro.dtu.dk/31545/.}. The data consists of a Carotid artery of a healthy volunteer examined using B-K 8556 ultrasound scanner with a 3.2 MHz linear array probe transducer in duplex mode. The sampling frequency was 8 kHz and the pulse  repetition frequency was 3.5 MHz. An observation window of $P=128$ samples was chosen and a nested transmission scheme with $N_1=31$ and $N_2=4$ for both NEST and NESPRIT, leading to a total number of 35 emissions ($\sim 27\%$). The obtained spectrograms are shown in Fig. \ref{fig:invivo}. As can be seen from the figure, NEST and NESPRIT successfully recover the Doppler frequencies from a small number of transmissions, producing similar spectrograms to that obtained by Welch's method using the fully sampled data. These results validate the effectiveness of the proposed methods and their potential for clinical use.    

\begin{figure*}
\centering
\includegraphics[trim={0cm 4cm 0cm 4cm},clip,width=1\linewidth,height=8cm]{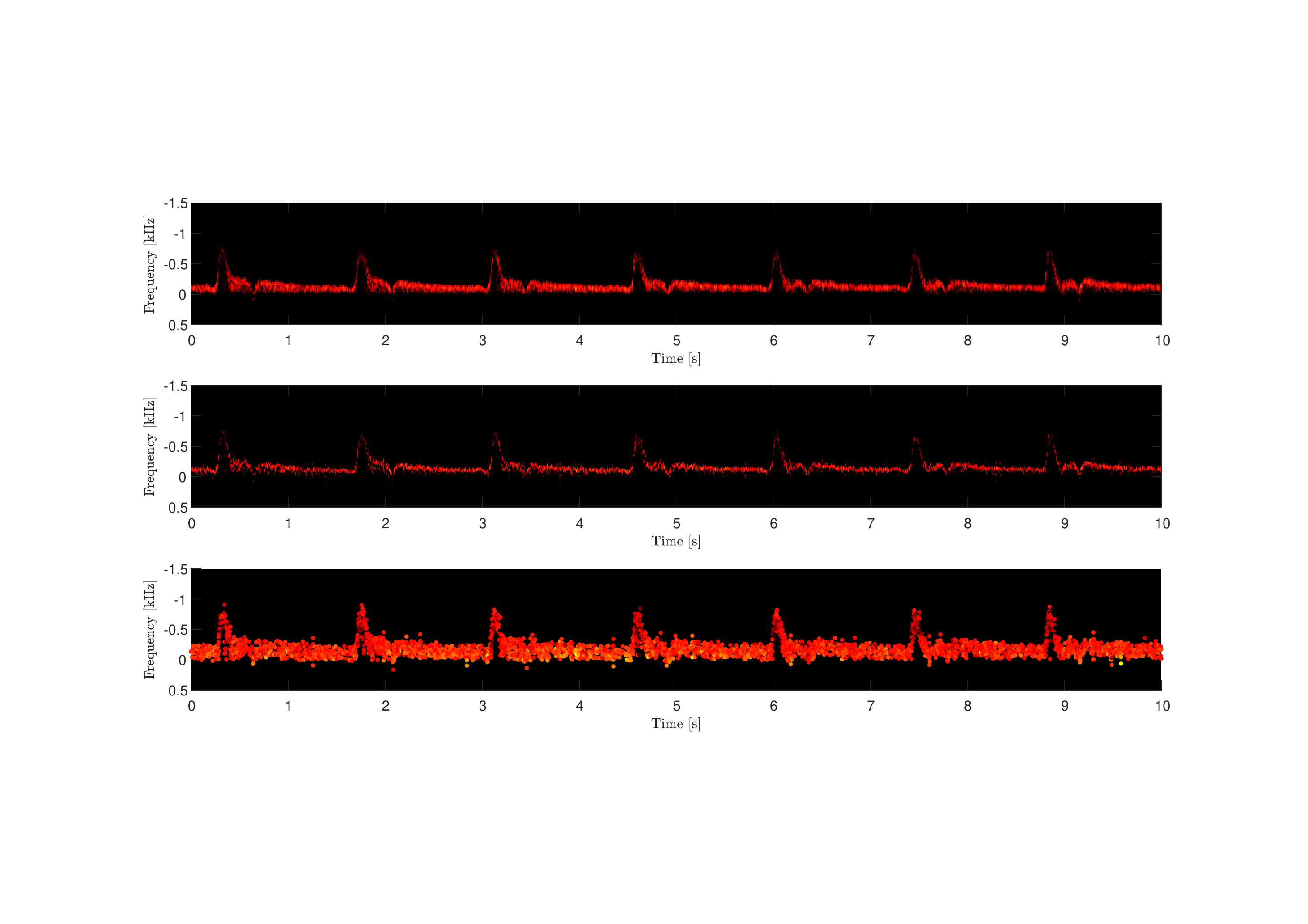}
\caption{{\bf In vivo.} Spectrograms of in vivo data of Carotid artery using 35 out 128 ($\sim 27\%$) according the nested transmission pattern. (Top) Welch's method (middle) NEST (bottom) NESPRIT. All spectrograms are displayed with a dynamic range of 60 dB.}
\label{fig:invivo}
\end{figure*}

\section{Conclusion}
\label{sec:con}
In this paper, we presented a sparse irregular transmit scheme for medical spectral Doppler based on nested arrays. Using this approach,  we showed that in noiseless settings the blood spectrum can be recovered from only $2\sqrt{P}-1$ emissions, where $P$ is the size of the observation window. Two recovery algorithms NEST and NESPRIT, which exploit the proposed transmission pattern, were presented. NEST exhibits low complexity and performs efficient reconstruction of the blood spectrum with enhanced resolution. NESPRIT  theoretically achieves infinite frequency-precision in recovering the blood velocities at the expense of computational load. Moreover, any clutter filter and apodization function can be easily incorporated into NEST and NESPRIT. Both algorithms were evaluated and tested with Field II simulation data of pulsating flow from the femoral artery. NEST and NESPRIT were compared and shown to outperform current state-of-the-art methods by successfully recovering the blood spectrum from only 12\% of the Doppler transmissions. Finally, in vivo results showed the ability of the proposed techniques to yield valid spectrograms using far fewer emissions, proving their potential for clinical use. This paves the way for a duplex mode displaying high resolution blood spectrograms while providing high quality B-mode images at a high frame rate.

\appendices
\section{Proof of Lemma 1}
\label{app:lemma1}
First, it easy to see that the maximal difference in absolute values between elements of $\mathcal{S}_\text{N}$ is $N_2(N_1+1)-1$. Hence, there is no integer $k$ such that $|k| >N_2(N_1+1)-1$ which belongs to $\mathcal{D}$  or $\mathcal{D}_u$.

Given any integer $k$ in the range $-N_2(N_1+1)+1\leq k\leq N_2(N_1+1)-1$,  we have that $k\in\mathcal{D}_u$ if there exists $p_i$ and $p_j$ which satisfy
\begin{equation}
k = p_i-p_j,\quad p_i,p_j\in\mathcal{S}_\text{N}.
\label{eq:kinS}
\end{equation}
Note that if for a specific $k$ there exists such $p_i,p_j\in\mathcal{S}_\text{N}$ , i.e., $k\in\mathcal{D}_u$, then also $-k\in\mathcal{D}_u$ since
\begin{equation}
-k = -(p_i-p_j)=p_j-p_i.
\end{equation} 
Therefore, we focus on proving (\ref{eq:kinS}) only for non-negative integers $k$ in the range $0\leq k\leq N_2(N_1+1)-1$. 

Every integer $k$ in the desired range can be decomposed as
\begin{equation}
k = m(N_1+1)+r,
\label{eq:k}
\end{equation}
where $m$ and $r$ are integers in the ranges $0\leq m\leq N_2-1$ and $0\leq r\leq N_1$, respectively. Denoting $p_i=(m+1)(N_1+1)$ and $p_j=N_1+1-r$, we can rewrite (\ref{eq:k}) as
\begin{align}
\begin{split}
k &= m(N_1+1)+r = \\ &=(m+1)(N_1+1)+r - N_1-1= \\
&= (m+1)(N_1+1) - (N_1+1-r) = \\
&= p_i-p_j.
\end{split}
\end{align}
By definition $p_i\in\mathcal{S}_{\text{N}_2}$. When $r=0$, $p_j\in\mathcal{S}_{\text{N}_2}$; otherwise $p_j\in\mathcal{S}_{\text{N}_1}$. Thus, $p_i,p_j\in\mathcal{S}_\text{N}$ and we conclude that $k\in\mathcal{D}_u$.

\section{Proof of Theorem \ref{theo:mintrans}}
\label{app:theo1}
Denoting $\tilde{N_1}=N_1+1$, we recast problem (\ref{eq:mintrans}) as follows
\begin{align}
\begin{split}
\underset{\substack{\tilde{N_1},N_2\in\mathbb{N}^{+} \\ \tilde{N_1}\geq 2  }}{\min}\qquad &\tilde{N_1}+N_2 -1\\ 
\text{subject to}\qquad &N_2\tilde{N_1}=P.
\end{split}
\label{eq:mintrans2}
\end{align} 
From (\ref{eq:mintrans2}), it is easy to see that $N_2=\frac{P}{\tilde{N_1}}$, where  $\tilde{N_1}$ is a divisor of $P$. Assuming $\tilde{N_1}\leq N_2$, we have 
\begin{equation}
\tilde{N_1}=\underset{\mathcal{D}_1}{\text{argmin}} \quad d+\frac{P}{d},
\end{equation}
where we neglect the constant term $-1$.

Next, we define a function $f:[1,\sqrt{P}\,]\rightarrow\mathcal{R}^+$ over a continuous domain
\begin{equation*}
f(x)=x+\frac{P}{x}.
\end{equation*} 
The function $f(x)$ is continuous and differentiable over the open
interval $(1,\sqrt{P})$ .
Its derivative is given by
\begin{equation*}
\frac{df}{dx}=1-\frac{P}{x^2}< 0,
\end{equation*}
hence, $f(x)$ is monotonically decreasing. Since $\mathcal{D}_1\subset[1,\sqrt{P}\,]$, denoting $\tilde{N_1}=\max(\mathcal{D}_1)$, we have
\begin{equation*}
f(\tilde{N_1})< f(d),\quad d\in\mathcal{D}_1.
\end{equation*}
Therefore, the optimal solution is given by $N_1=\max(\mathcal{D}_1)-1$ and $N_2=\frac{P}{\tilde{N_1}}=\min(\mathcal{D}_2)$ accordingly.
By interchanging the roles of  $\tilde{N_1}$ and $N_2$ we get the solution for $\tilde{N_1}\geq N_2$, given by $N_1=\min(\mathcal{D}_2)-1$ and $N_2=\max(\mathcal{D}_1)$.

\section{Proof of Theorem \ref{theo:kmintrans}}
\label{app:theo3}
First, consider a given K-level nested array with $L$ levels and $\{N_i\}_{i=1}^L$. Notice that if $N_K=1$, then the resulting geometry can be seen as a nested array with $L-1$ levels and $\{\tilde{N}_i\}_{i=1}^{L-1}$ where
\begin{align*}
&\tilde{N}_i=N_i,\quad i=1,...,L-2, \\
&\tilde{N}_{L-1}=N_{L-1}+1.
\end{align*}  
Therefore, we assume that $N_K>1$ to avoid ambiguity.

For simplicity of analysis, we define
\begin{equation*}
Z_i\triangleq
\begin{cases}
N_i+1,\quad i=1,2,..,K-1,\\
N_K,\qquad\, i=K.
\end{cases}
\end{equation*}
An equivalent problem to (\ref{eq:kmintrans}) can be rewritten as
\begin{align}
\begin{split}
\underset{K\in\mathbb{N}^{+}}{\min}\quad\underset{\substack{Z_1,...,Z_K\in\mathbb{N}^{+} \\ Z_1,...,Z_{K}\geq 2}}{\min}\qquad &\sum_{i=1}^{K} Z_i-K+1 \\ 
\text{subject to}\qquad\qquad &\prod_{i=1}^{K}Z_i=P.
\end{split}
\label{eq:kmintransapp}
\end{align}
Following (\ref{eq:kmintransapp}), we wish to prove that $K=\Omega$ and the optimal $\{Z_i\}_{i=1}^\Omega$ are given by the prime factors of $P$ with repetitions according to their multiplicities (up to rotation).


Assume by contradiction that the optimal solution satisfies  $K\neq \Omega$. The fundamental theorem of arithmetic states that every positive integer has a single unique prime factorization \cite{riesel1994prime}, hence, $K\leq\Omega$. Assume $K<\Omega$, then there exists $Z_i$ which is not prime, i.e., $Z_i$ can be decomposed into the multiplication of two smaller integers $Z_{i1}$ and $Z_{i2}$ where $Z_{i1},Z_{i2}\geq2$. This amounts to breaking the $i$th nested level into 2 levels such that now there are $K+1$ levels of nesting. Assuming that $Z_{i1}\leq Z_{i2}$  without loss of generality, we have
\begin{equation}
Z_{i1}+Z_{i2}\leq 2Z_{i2} \leq Z_{i1}Z_{i2}.
\end{equation}
Hence, breaking up the $i$th nesting level into two levels decreases the value of the objective function in contradiction to the optimality of the solution. 

Following the latter, we should go on splitting the nesting levels till all $Z_i$ are prime numbers. This, along with the fact that the prime factorization is unique, implies that the total number of levels of nesting is $K=\Omega$  and the optimal $\{Z_i\}_{i=1}^\Omega$ are the prime factors of $P$.


\ifCLASSOPTIONcaptionsoff
  \newpage
\fi



\bibliographystyle{IEEEtran}
\bibliography{IEEEabrv,refs_jrnl}

\end{document}